\date{\today}
\begin{abstract}
Readout errors contribute significantly to the overall noise affecting present-day quantum computers. However, the complete characterization of generic readout noise is infeasible for devices consisting of a large number of qubits. Here we introduce an appropriately tailored quantum detector tomography protocol, the so called Quantum Detector Overlapping Tomography, which enables efficient characterization of $k-$local cross-talk effects in the readout noise as the sample complexity of the protocol scales logarithmically with the total number of qubits. We show that QDOT data provides information about suitably defined reduced POVM operators, correlations and coherences in the readout noise, as well as allows to reconstruct the correlated clusters and neighbours readout noise model. Benchmarks are introduced to verify utility and accuracy of the reconstructed model. We apply our method to investigate cross-talk effects on 79 qubit Rigetti and 127 qubit IBM devices. We discuss their readout noise characteristics, and demonstrate effectiveness of our approach by showing  superior performance of  correlated clusters and neighbours over models without cross-talk in model-based readout error mitigation applied to energy estimation of MAX-2-SAT Hamiltonians, with the improvement on the order of 20\% for both devices. 

\end{abstract}
\begin{document}

\title{Efficient reconstruction, benchmarking and validation of cross-talk models\\ in readout noise in near-term quantum devices 
}
\author{Jan Tuziemski}
\affiliation{\cft}
\author{Filip B. Maciejewski}
\affiliation{\cft}		
\affiliation{Research Institute for Advanced Computer Science (RIACS), USRA, Moffett Field, CA, USA}
\author{Joanna Majsak}
\affiliation{\cft}
\affiliation{Quantum Research Centre, Technology Innovation Institute, Abu Dhabi, UAE}
\affiliation{Faculty of Physics, University of Warsaw, Pasteura 5, 02-093 Warsaw, Poland}
\author{Oskar S{\l}owik}
\affiliation{\cft}	 
\author{Marcin Kotowski}
\affiliation{\cft}	
\author{Katarzyna  Kowalczyk-Murynka
}
\affiliation{\cft}	
\author{Piotr Podziemski}
\affiliation{\cft}	

\author{Micha\l\ Oszmaniec}
\affiliation{\cft}
\affiliation{NASK National Research Institute, Kolska 12,  01-045 Warsaw,
            Poland}
\email{oszmaniec@cft.edu.pl}

\maketitle

\section{Introduction}

Quantum computing shows considerable promise of outperforming classical computers in certain tasks, ranging from number factoring to quantum chemistry problems
\cite{ShorAlgorithm,BrandaoSDP,Harrow_2009,QuantumChemistrySpeedup, Pirnay2023superpolynomial}. However, present-day quantum computers are affected by noise, and execution of fault tolerant quantum algorithms is impossible.  It is still discussed whether NISQ devices can provide a quantum advantage \cite{Arute2019quantum,Pan2022SolvingSycamore,JianWeiPanQuantumAdvantage,Hsin-YuanAdvantage,farhi2014quantum,Cerezo2021VQAs,StilckLimitationsOptimization, DePalmaLimitationsVariational, AharonovGaoPolynomialTime, Oh2023classical}.

 Since gate errors limit depths of circuits that can be reliably performed, their efficient characterization \cite{EmersonSymetric, MagesanRandomizedBenchmarking, CrosstalkInQuantumProcessorsBlume-Kohout,  Harper2020efficient, HashimRandomizedCompiling,GateSetTomography_Nielsen2021,CompresiveGST_Brieger,Eisert2020Review,HelsenGeneralFramework,Helsen2023ShadowEstimation} and mitigation has attracted a lot of attention in recent years  \cite{Wallman2016noise, Temme2017error, Kandala2019error, huggins2020efficient, Endo2018, BergSparsePauli, sun2020mitigating, CaiMitigationReview,Quek2023exponentially}. Besides gate imperfections a significant source of errors in NISQ devices is readout of qubits, and a dedicated set of characterization tools \cite{Chen2019detector,Maciejewski2021modeling,Geller2020efficient,lilly2020modeling,Hamilton2020scalable} and error mitigation methods has been developed \cite{NationMitigation,seo_mitigation_2021,Kwon2020hybrid,Nachman2019unfolding,Bravyi2020mitigating,zheng2020bayesian,WangVonNeumann,funcke2020measurement,Maciejewski2020mitigation}. The recent data for superconducting architectures \cite{GoogleQEC, IBMResources,Arute2019quantum, GoogleQECexperiment}  show that single qubit readout errors are of the order of few percent and thus comparable to two-qubit gate errors, and this does not take into account cross-talk effects. Needless to say, measurements are indispensable part of virtually all quantum algorithms.

Mitigation of readout errors allows to improve estimation of expectation values of observables, a task ubiquitous in quantum information protocols such as variatioal algorithms, and classical shadows protocols \cite{Chen2021RobustShadowEstimation,Koh2022classicalshadows,McNMO2022}. Several error mitigation techniques have been developed \cite{NationMitigation,WangVonNeumann,Bravyi2020mitigating,Maciejewski2021modeling,Geller2020efficient,Nachman2019unfolding,Geller2020rigorous,zheng2020bayesian,funcke2020measurement,van_den_Berg_ModelFreeMitigation, Geller2021ConditionallyRigorous,seo_mitigation_2021,Smith:2021iwt,Hicks2022ActiveREM,BoREMDSparseResults,Kim2021REMDeepLearning}.
However, efficient readout noise modeling is a challenging task since complexity of general noise models grows exponentially with the system size, and characterization based on direct quantum detector tomography \cite{Lundeen2008} is feasible only for a few qubit systems \cite{Chen2019detector,Maciejewski2020mitigation}. To circumvent this difficulty several approaches have been developed. One example is Tensor Product Noise Model (TPM), in which the total noise matrix is approximated as a tensor product of single qubit noise matrices  \cite{Maciejewski2020mitigation,Bravyi2020mitigating,Smith:2021iwt}, at the expense of not taking cross-talk into account. Other approaches to cross-talk readout noise modeling based on assumptions on the noise structure, such as a model based on Continuous Time Markov Processes \cite{Bravyi2020mitigating} or a heuristic model presented in \cite{Geller2020efficient,Geller2021ConditionallyRigorous} have not been applied to model readout-noise on large scale quantum devices. 
 
 In this work we present an efficient method of $k$-local cross-talk readout noise characterization. This method is based on Quantum Detector Overlapping Tomography (QDOT) protocol, which is  schematically presented in Fig. \ref{fig:DOT_scheme}. This single-shot protocol characterizes reduced POVMs operators, which correspond to a particular marginalization of a $N$ qubit POVM: The qubits are divided into a $k$ qubit subset, and its $N-k$ qubit complement. The input state on the subset is an arbitrary state, and the state on the complement is fixed. After the measurement is performed, the results on the subset are saved and on the complement forgotten. QDOT protocol is an extension of Overalpping Tomography protocol for quantum states \cite{Cotler2020quantum} to the case of quantum measurements \cite{Maciejewski2020mitigation}. QDOT fits into the broader framework of randomized measurement techniques \cite{Elben_RandomizedToolbox}, such as classical shadows \cite{Huang2020predicting}, and classical shadows of quantum processes \cite{LevyShadowProcessTomography, KunjummenShadowProcessTomography}. We prove that the sample complexity of the protocol scales exponentially in $k$ and logarithmically in $N$.  The reduced POVM operators are used to analyze $k$-local cross-talk effects in readout noise, and  to reconstruct the correlated clusters noise model (discussed e.g. in \cite{Maciejewski2021modeling}).  We apply the method to characterize $k$-local cross-talk in readout noise of large scale NISQ devices - 79 qubits Rigetti Aspen-M-3, as well as 127 qubit IBM Cusco.  To demonstrate potential of the method for error mitigation we perform benchmarks based on energy estimation of MAX-2-SAT Hamiltonians ground states, and observe that error mitigation based on the reconstructed correlated clusters noise model leads to a significant improvement of the results. Design and implementation of the experiments, as well as data analysis was obtained using QREM package \cite{qrem}.

We start by reviewing the main concepts used throughout the work.

\onecolumngrid

\begin{figure*}
\centering
\includegraphics[width=0.99\textwidth]{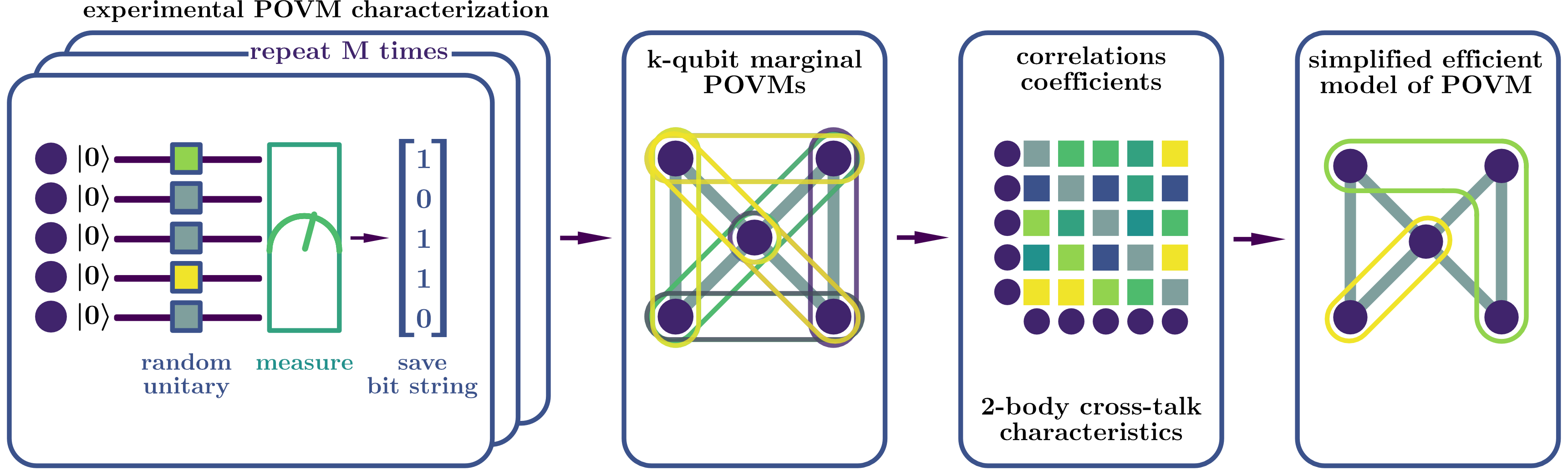}
\caption{\label{fig:DOT_scheme}
Schematic representation of the noise characterization procedure proposed in this work. In the first step quantum circuits for Quantum Detector Overlapping Tomography are designed and implemented on a device. In the post-processing stage QDOT data are used to reconstruct all reduced POVM operators acting on $k$ qubits. The reduced POVM operators are further used to provide various characteristics of cross-talk effects in the readout error.  Finally, a CN noise model best describing the experimental data is found. 
}
\label{fig:schematic_strategy}
\end{figure*}

\begin{figure*}
\centering
\includegraphics[height=4cm]{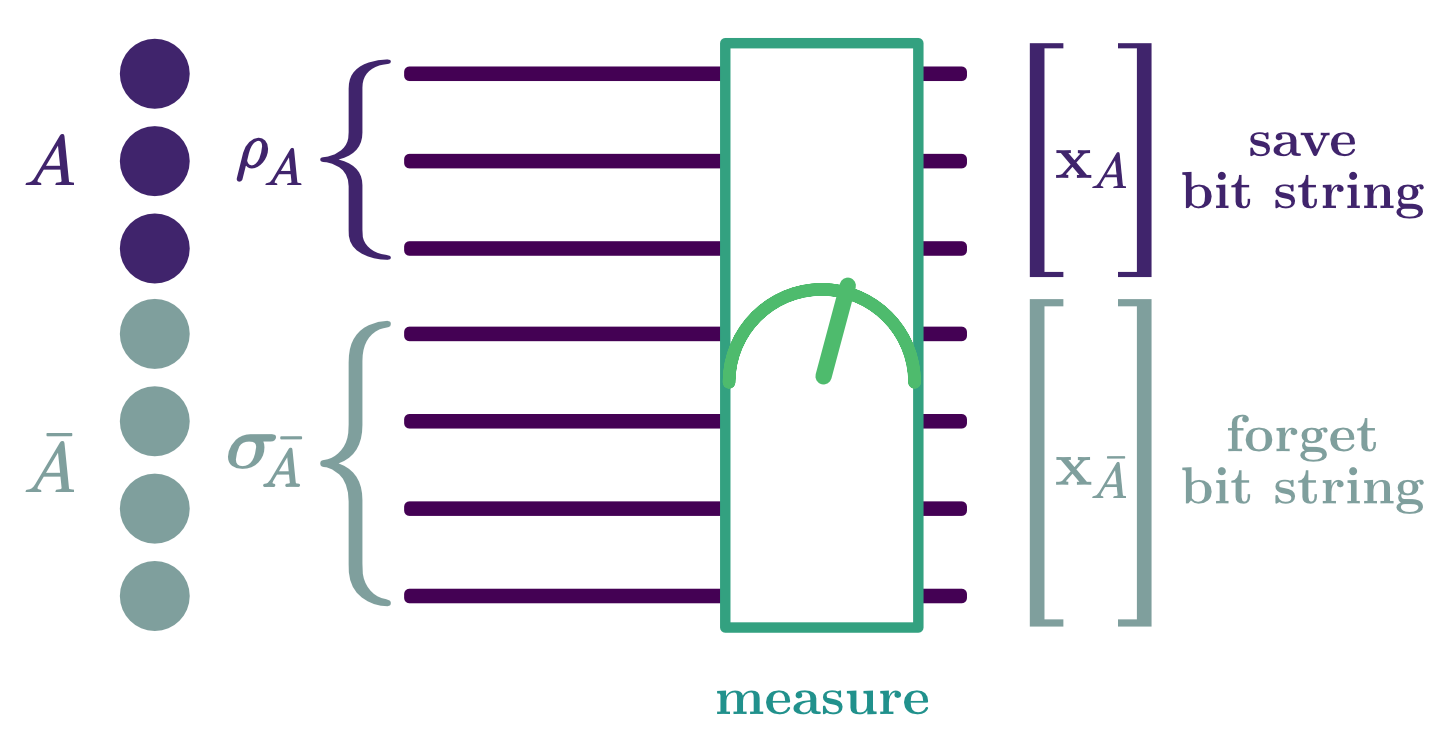}
\caption{\label{fig:marginal_measurement}
Schematic representation of the reduced measurement process defined in Eq. (\ref{eq:reduced_povm_definition}). The measurement of an input state on subsystem $A$ - $\rho_{A}$ is performed with a fixed state on $\bar{A}$ (the complement of $A$) - $\sigma_{\bar{A}}$. Measurement outcomes on subsystem $A$ are registered whereas those on $\bar{A}$ are forgotten. 
}
\end{figure*}

\clearpage

\twocolumngrid

\subsubsection{Basic definitions}

We are interested in errors occurring during a measurement process, which in quantum mechanics is described by a Positive Operator-Valued Measure (POVM).
A POVM $\M$  with $m$ outcomes performed on a $d$ dimensional Hilbert space consists of $m$ positive operators, so-called effects, that sum up to identity  
 \begin{eqnarray}
   \M = \left\{ M_\x \right\}_{\x=1}^m \; \; \; \; \forall_\x M_\x \geq 0 \; \; \; \;  \sum_{\x=1}^m M_\x = \mathbb{I}.     
\end{eqnarray}

The probability of obtaining outcome $\x$ while performing a measurement $\M$ on a quantum state $\rho$ is given by the Born rule: $\text{Pr}\left( \x \vert \M,\rho \right) =  \tr \left(M_\x \rho  \right)$ . 
As a model for an ideal detector, we will consider a $d=2^N$-outcome projective measurement of $N$ qubits $\P=\cbracket{\ketbra{\x_i}{\x_i}}_{i=0}^{2^N-1}$, with $\x_i$ is a computational basis state, which is typically a target measurement in $N$ qubit quantum devices.

\subsubsection{Readout noise modeling}\label{sec:background_modeling}

Let us now consider a POVM $\M$, which is a reconstruction of an ideal computational basis projective measurement $\P$. Due to the readout noise $\M$ may no longer be a projective measurement in computational basis: Effects of $\M$ may possess off-diagonal elements with respect to the computational basis of $\P$. In the multi-qubit case effects of $\M$ may not be decomposable as tensor products of single effects elements. Such properties of $\M$ constitute genuine quantum noise effects, and in cases when they are significant, the noise transforming $\P$ to $\M$ cannot be accurately described by a classical noise model. However, several investigations (see, e.g., \cite{Maciejewski2020mitigation,Chen2019detector,Geller2020efficient,Maciejewski2021modeling}), have shown that, to a good approximation, readout noise can be modeled as a classical stochastic process. In such a case POVM $\M$ is related to $\P$ via a stochastic transformation $\Lambda$: $\M = \Lambda \P$ so that $M_\x = \sum_{\x'} \Lambda_{\x \x'}P_{\x'}$. Due to the linearity of Born rule it immediately follows that a similar relation holds for probabilities describing statistics of the ideal $\p^{ideal}$  and the noisy $\p^{noisy}$  device: $\p^{noisy}= \Lambda \p^{ideal}$ \cite{Maciejewski2020mitigation}. The action of stochastic classical noise can also be interpreted as a reshuffling of results that is done in the post-processing stage.

\subsubsection{Correlated readout noise model}

In this work we aim to characterize the cross-talk noise model that was considered in e.g. \cite{Maciejewski2021modeling}. In this model qubits exhibiting strong correlations in the readout noise form clusters $\cluster_{\cindex}$. Moreover, the noise affecting qubits in a cluster $\cluster_{\cindex}$  may depend on the state of a group of other qubits. If for a given cluster $\cluster_{\cindex}$  such a group existsit is called a neighborhood of $\cluster_{\cindex}$, and denoted as  $\mathcal{N}_{\cindex}$. The general noise model matrix $\Lambda$ is composed of the following matrix elements     
\begin{align}\label{eq:noise_model_correlated}
	\Lambda_{\x|\y} = \prod_{\chi} \Lambda^{ \y_{\mathcal{N}_{\chi}}}_{\x_{\cluster_\chi}|\y_{\cluster_\chi}} \ .
\end{align}
\noindent 
 where $\cbracket{\cluster_{\cindex}}_{\cindex}$  is a collection of clusters, which fulfill $\cluster_{\cindex} \cap \cluster_{\cindex'} = \emptyset$ if $\cindex \neq \cindex'$ and $\cup_{\cindex}\cluster_{\cindex} = \sbracket{N}$, and $N$ is the total number of qubits. $\Lambda^{ \y_{\mathcal{N}_{\chi}}}$ is a matrix describing measurement noise affecting qubits in cluster $\cluster_{\cindex}$, and it depends on the pre-measurement state $\y_{\mathcal{N}_{\chi}}$ of the qubits in the neighborhood $\mathcal{N}_{\cindex}$ . The matrices $\y_{\mathcal{N}_{\chi}}$ are left stochastic. The total number of parameters needed to specify the noise model of the form of Eq.  (\ref{eq:noise_model_correlated}) is $\sum_{\cindex}2^{|\cluster_{\cindex}|} 2^{|\mathcal{N}_{\cindex}|}$, where $|\cluster_{\cindex}|$ and $|\mathcal{N}_{\cindex}|$ is the size of $\cindex$'th cluster and its neighborhood. This in turn implies that, provided that clusters' and their neighborhoods' sizes are bounded by a constant, the description of the measurement noise model is efficient. The locality structure of the noise model needs to be established based on experimental data from the device. It is not necessarily influenced by the device architecture.

 Note that without the neighborhoods the noise model matrix can be decomposed as a tensor product over noise matrices of individual clusters

 \begin{align}\label{eq:noise_model_correlated_no_neighbors}
	    \Lambda = \bigotimes_\cindex \Lambda_{\cluster_{\cindex}}.
	\end{align}

As discussed in \cite{Maciejewski2021modeling}, according to the noise model defined in Eq. (\ref{eq:noise_model_correlated}) the relation between a noisy and ideal marginal probability distribution $\p^{noisy}_\S$, $\p^{ideal}_\S$ on $\S$, is given by the local noise matrices acting on the qubits from $\S$ and the joint neighborhood of $\S$, $\mathcal{N}\rbracket{\S} \coloneqq  \cup_{\cindex\in\mathcal{A}}\mathcal{N}_{\cindex} \setminus \S$ (the qubits which are neighbors of qubits from $\S$ but are not in $\S$ themselves). Due to this fact the in error mitigation one can not directly use inverse of the noise matrix inverse. Following \cite{Maciejewski2021modeling} we use an effective noise model on the marginal $\S$, i.e. a noise matrix averaged over states of qubits from the joint nieghborhood defined above  
\begin{align}\label{eq:marginal_lambda_average}
 \Lambda^{\S}_{av}\coloneqq \frac{1}{2^{|\mathcal{N}\rbracket{\S}|}} \sum_{\Y_{\mathcal{N}\rbracket{\S}}} \Lambda^{\Y_{{\mathcal{N}(\S)}}} \ .
\end{align}

\section{Results}
\subsection{Reduced quantum measurement operators}
\label{sec:Tomography}

For quantum states the definition of a state marginal is unambiguous. It is not the case for the measurement operators, where a reduction of a measurement operator to a subset of qubits may depend on a state measured on the subset's complement. In order to avoid this ambiguity, to define a reduced POVM we fix a state on the complement, and in most cases it is set to the maximally mixed state, see Fig. \ref{fig:marginal_measurement}.  

We denote the subset of $k$ qubits as $A$ and as $\bar{A}$ its $N-k$ qubits complement, where $N$ is the total number of qubits.
The reduced  POVM effect $ M^{A, \sigma_{\bar{A}}}_{\x_A}$ is of the form  
\begin{eqnarray}\label{eq:reduced_povm_definition}
M^{A, \sigma_{\bar{A}}}_{\x_A} \equiv \sum_{\x_{\bar{A}}} \tr_{\bar{A}} \left[ M_{\x_{A}\x_{\bar{A}}}  \left( I_A \otimes \sigma_{\bar{A}} \right) \right],
\end{eqnarray}
where $\sigma_{\bar{A}}$ is a fixed state on the complement, see also Fig. \ref{fig:marginal_measurement}, and $M_{\x_{A}\x_{\bar{A}}}$ is an effect of the total POVM $\M$ corresponding to the result $\x=\x_{A}\x_{\bar{A}}$, where $\x_{A}, \x_{\bar{A}}$ are outcomes on $A$ and $\bar{A}$ respectively.
The reduction considered here has an operational meaning: Reduced POVM elements correspond to a measurement process, in which subsystem $A$ is prepared in the state $\rho_{A}$, the state of the complement $\bar{A}$ is $\sigma_{\bar{A}}$, and the measurement results on subsystem $\bar{A}$ are forgotten (by summing over $x_{\bar{A}}$ ). 
Unless stated otherwise we will choose the state of the complement as $\sigma_{\bar{A}}=\tau_{\bar{A}} \equiv \frac{I_{\bar{A}}}{2^{N-k}}$, i.e. the maximally mixed state,  and denote 
\begin{eqnarray}
\label{eq:reducedPOVMIdentity}
M^{A}_{\x_A} \equiv M^{A, \tau_{\bar{A}} }_{\x_A} = \frac{1}{2^{N-k}} \sum_{\x_{\bar{A}}} \tr_{\bar{A}} \left( M_{\x_A \x_{\bar{A}} } \right).
\end{eqnarray}



In the next subsection we will present a protocol that allows for efficient estimation of a reduced POVM $\M^A$, as well as its matrix elements  $ M_{\x_A,\y_A}^{A} \equiv  \text{tr} \left(   M^A_{\x_A}  |\y_A \rangle \langle \y_A| \right)$, where the comma is used to separate measurement results $\x_A$ from the input state $\y_A$. In Appendix \ref{sec:channel_reductions} we show that the above reductions can be seen as special cases of general reductions of  quantum channels framework. For other approaches to channels reduction see works on  channel marginal problem  \cite{HsiehChannelMarginalProblem2021}, as well as on process shadow tomography \cite{LevyShadowProcessTomography,KunjummenShadowProcessTomography}.

\subsection{Quantification of quantum and classical effects in readout noise}
\label{sec:correlations_quantification}
In this section we utilize the reduced POVMs to define some basic indicators of quantum and classical effects in the readout noise. To this end we introduce the following theoretical quantities, which later are used in assessment of experimental data from quantum hardware.      

\subsubsection{Quantum correlations coefficients}

Pairwise correlations in the readout process are quantified using two-qubit quantum correlation coefficients, which are a generalization of coefficients introduced in \cite{Maciejewski2020mitigation}.
This construction is based on two types of reduced POVMs discussed in Section~\ref{sec:Tomography}. A 2-qubit reduced POVM  $\M^{\left\{ij \right\}}$, which is defined via taking the reduction with the maximally mixed state on the complement of qubits $i,j$ (cf. Eq. (\ref{eq:reducedPOVMIdentity})), is further reduced to a one-qubit POVM $\M^{i,\rho_j}$ , acting on qubit $i$, by fixing an input state $\rho_j$ on qubit $j$.

The quanum correlation coefficient $c^{\dist}_{j \rightarrow i}$ corresponding to distance $\text{dist}(\cdot,\cdot)$ is defined as 
\begin{eqnarray}
\label{eq:correlation_coefficient_general}
    c^{\dist}_{j \rightarrow i} = \sup_{\rho_j, \sigma_j} \dist \left( \M^{i,\rho_j}, \M^{i,\sigma_j} \right),
\end{eqnarray}
where the supremum is taken over all possible pairs of qubit states $j$, the reduced POVM effects $M^{i,\rho_j}_{\x_i}$ are defined according to Eq. (\ref{eq:reduced_povm_definition}), and $\dist(\cdot,\cdot)$ denotes either worst-case (it will be abbreviated as WC) or average-case distance (it will be abbreviated as AC) \cite{MaciejewskiAverageCaseDistance,MaciejewskiAverageCaseDistanceProofs}. For completeness definitions and basic properties of the distances are stated in Sec. \ref{sec:background_distances}.
The quantum correlation coefficient $c_{j\rightarrow i}$ quantifies how much the measurement statistics on qubit $i$ can differ depending on the input state on qubit $j$.
We call the correlation coefficients classical when the supremum in Eq. (\ref{eq:correlation_coefficient_general}) is restricted to pure states from computational basis. 
Hence for a fixed POVM the difference between quantum and calssical correlation coefficient is an indicator of genuine quantum effects in the readout errors.

Computation of quantum correlation coefficients and their properties are discussed in Appendix~\ref{sec:app:correlation_coefficients}, where we e.g. provide analytical form of the correlation coefficients for diagonal POVMs, and we discuss a heuristic optimization procedure in the quantum case.

\subsubsection{Coherent errors}

A distance between a reconstructed POVM $\M$ and its diagonal counterpart is a natural quantifier of coherent errors, as the classical noise does not introduce coherences in POVM's effects. Therefore, following \cite{Maciejewski2020mitigation},   we define coherence strength of a noisy $d$ dimensional POVM $\M$ associated with a measure of distance $\dist(\cdot,\cdot)$ as    
\begin{align}\label{eq::coherence_definition}
    \mathrm{CS}(\M)\left(\M, d\right) = \dist\left(\M, \Phi_{\text{dep}}(\M)\right) \ ,
\end{align}
where $\Phi_{\text{dep}}$ is the maximally dephasing channel that acts on POVM effect $M_\x$ as $\Phi_{\text{dep}}(M_\x) = \text{diag}(M_\x)$, with $\text{diag}(M_\x)$ denoting the diagonal part of $\M_\x$. The final measure  depends on the choice of distance $\dist(\cdot,\cdot)$. Here, in addition to the operational distance considered in \cite{Maciejewski2020mitigation}, we use also the average-case distance \cite{MaciejewskiAverageCaseDistance,MaciejewskiAverageCaseDistanceProofs}. For definition of the distances see Sec. \ref{sec:background_distances}.

This measure cannot be directly efficiently computed for generic high-dimensional POVMs.
However, the tomographic reconstruction of a POVM is not needed to lower bound the coherence strength. More precisely, in Sec. \ref{sec:app_cs_bound} we prove the following proposition  
\begin{prop}
\label{prop:cs_lower_bound}
For a POVM $\M$ and quantum states $|p\rangle, |q\rangle$ constructed from tensor product of Pauli $X,Y$ eigenstates the following bound holds 
\begin{eqnarray}
\label{eq:cs_lower_bound}
    \mathrm{CS}(\M)^{AC} \geq \frac{\tvd \left( \Pr(p), \Pr(q)   \right)}{d \| \delta_{p,q} \|_{HS}},
\end{eqnarray}
where $\Pr(p)$ ($\Pr(q)$) is probability distribution obtained by measuring $|p\rangle$ ($|q\rangle$)  with $\M$, and $\delta_{p,q} =  \left( | p \rangle \langle p | - | q \rangle \langle q | \right ) $, and $d$ is  dimension of the corresponding Hilbert space. 
\end{prop}
Note that estimation of the bound requires probability distributions steaming from quantum circuits preparing eigenstates of Pauli $X$ or $Y$ operators on individual qubit. Therefore, estimation of the bound involves less quantum resources than direct estimation of coherence strength.

Coherence strength of reduced few-qubits POVMs can be used to gain insights into properties of the global POVM. Large coherence strength of reduced POVMs is an indicator of low accuracy of a classical readout noise model for the considered device. Under the the assumption that that the global POVM $\M$ can be decomposed as a tensor product of reduced POVMs $\M = \bigotimes_i \M^{A_i}$, the coherence strength of reduced POVMs can be also used to upper bound coherence strength of the global POVM 
\begin{equation}
    \label{}
  \mathrm{CS}\left(\M, d\right)  \leq \sum_i \mathrm{CS}\left(\M^{A_i}, d_i\right).
\end{equation}
The above inequality is derived using subadditivity w.r.t. tensor product of distances involved in the definition of the coherence strength.

\subsection{Rigorous  overlapping tomography of quantum measurements}

Inspired by Overlapping Tomography (OT) protocol (see \cite{Cotler2020quantum}), we use a tailored single-shot tomography procedure, Quantum Detector Overlapping Tomography to perform $k$-qubit detector tomography, and reconstruct $k$-qubit reduced POVMs in a parallel way.  Our main result is that sample complexity, i.e. the number of circuits, of a QDOT experiment scales as $O(6^k\log(N))$ -  exponentially in $k$, i.e. considered locality of readout noise, and logarithmicaly in $N$. As a result, as long as locality of noise is constrained, characterization of readout noise is efficient for large number of qubits. We consider also a variant of QDOT protocol, in which only diagonal elements of reduced POVMs effects are reconstructed - Diagonal Detector Overlapping Tomography (DDOT), which can be useful e.g. when only stochastic features of noise are considered.

The full QDOT  protocol is presented in Appendix \ref{app:DOT}, here we sketch its main steps.  Firstly, a collection of random quantum circuits is generated, which consist of a single layer of one-qubit gates. The gates are chosen at random from a fixed set of one-qubit gates that prepare  eigenstates of Pauli operators or computational basis states in the case of DDOT (the protocol can use any informatially complete basis, the only difference is then the input state on $\bar{A}$, and hence the reduced POVMs will be of a form of Eq. (\ref{eq:reduced_povm_definition})). The errors introduced by the gates are neglected. Single copy measurements of the states from the collection are performed, and statistics of outcomes are used to construct an unbiased estimator of the reduced POVMs effects Eq. (\ref{eq:reducedPOVMIdentity}). Details, and sample complexity analysis of the full protocol are presented in Appendix  \ref{app:DOT}).

We provide results regarding sample complexity of the estimation of  noisy POVM $\M^{A}$ matrix elements. 
 \\

\textbf{Theorem 1 } \textit{Experimental QDOT data allow to estimate matrix elements of a noisy POVM $\M^{A}$ within $\epsilon$ error for all possible choices of the $k$ qubit subset $A$, provided that the number $N_{Mes}$ of circuits used in the experiment scales as $O\left(\frac{k 6^k \log (N)}{\epsilon^2} \right)$. Diagonal entries of $\M^{A}$ can be estimated by means of DDOT protocol, and the scaling is $O\left(\frac{k 2^k \log (N)}{\epsilon^2} \right)$   } \\
 
We also bound sample complexity needed to reconstruct a POVM $\M^{A}$ with a precision set by the average-case distance Eq. (\ref{eq:average_dist}) and operational distance Eq. (\ref{eq:operational_dist}). 
\textbf{Theorem 2 } \textit{Experimental QDOT data allow to estimate POVMs $\M^A$ for all possible choices of the $k$ qubit subset A and within $\epsilon$ error in the average case distance if the number $N_{Mes}$ of circuits used in the experiment scales as $O\left(\frac{k 6^k \sqrt{2^k+1}\log (N) }{\epsilon^2} \right)$   . The scaling for operational distance is $O\left(\frac{k 12^k \log (N) }{\epsilon^2} \right)$}  \\

The above theorem is based on sample complexity of Choi-Jamiolkowski state reconstruction using least squares approach to quantum process tomography \cite{SurawyStepneyQPT}. Exact scaling of sample complexity, including constant factors not included above, as well as proofs of both theorems are presented in Appendix \ref{app:DOT}.

In the QDOT protocol each random circuit (input state) is measured only once. This is an undesirable feature from the point of view of  the current generation of NISQ devices, for which it is more practical to measure a particular circuit many times. In Appendix \ref{app:DOT} we analyze modified version of the QDOT protocl, in which measurement settings are reused. Our analysis closely follows recent works on multi-shot classical shadows \cite{helsen2022thrifty, Zhou2023multishot}.

\clearpage

\clearpage

\onecolumngrid
\begin{figure*}[t]
\includegraphics[width=0.475\textwidth]{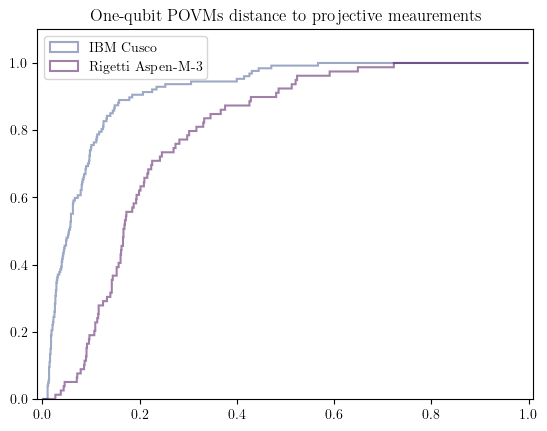}  
\includegraphics[width=0.475\textwidth]{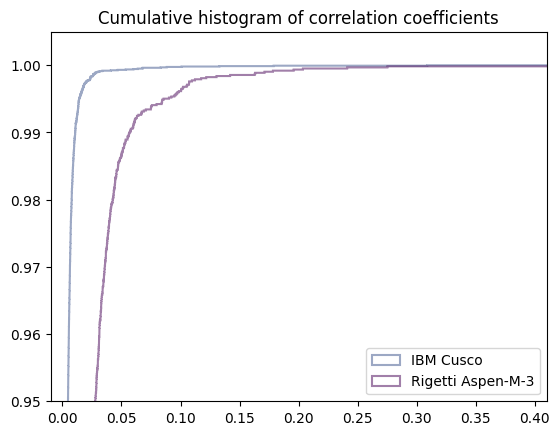} 
\caption{\label{fig:basic_properties} Left panel  -cumulative histogram of worst case distance distances between reconstructed one-qubit POVMs and ideal single-qubit computational basis POVMs from reconstructed single-qubit POVMS. The blue lines correspond to results obtained for IBM Cusco, and the purple to  Rigetti Aspen.    
Right panel - cumulative histograms for classical correlations coefficients calculated with worst-case distance. Note that majority of the coefficients is small -  the Y axis starts with $0.95$.
 }
\end{figure*}
\begin{figure*}
\centering
\includegraphics[width=0.435\textwidth]{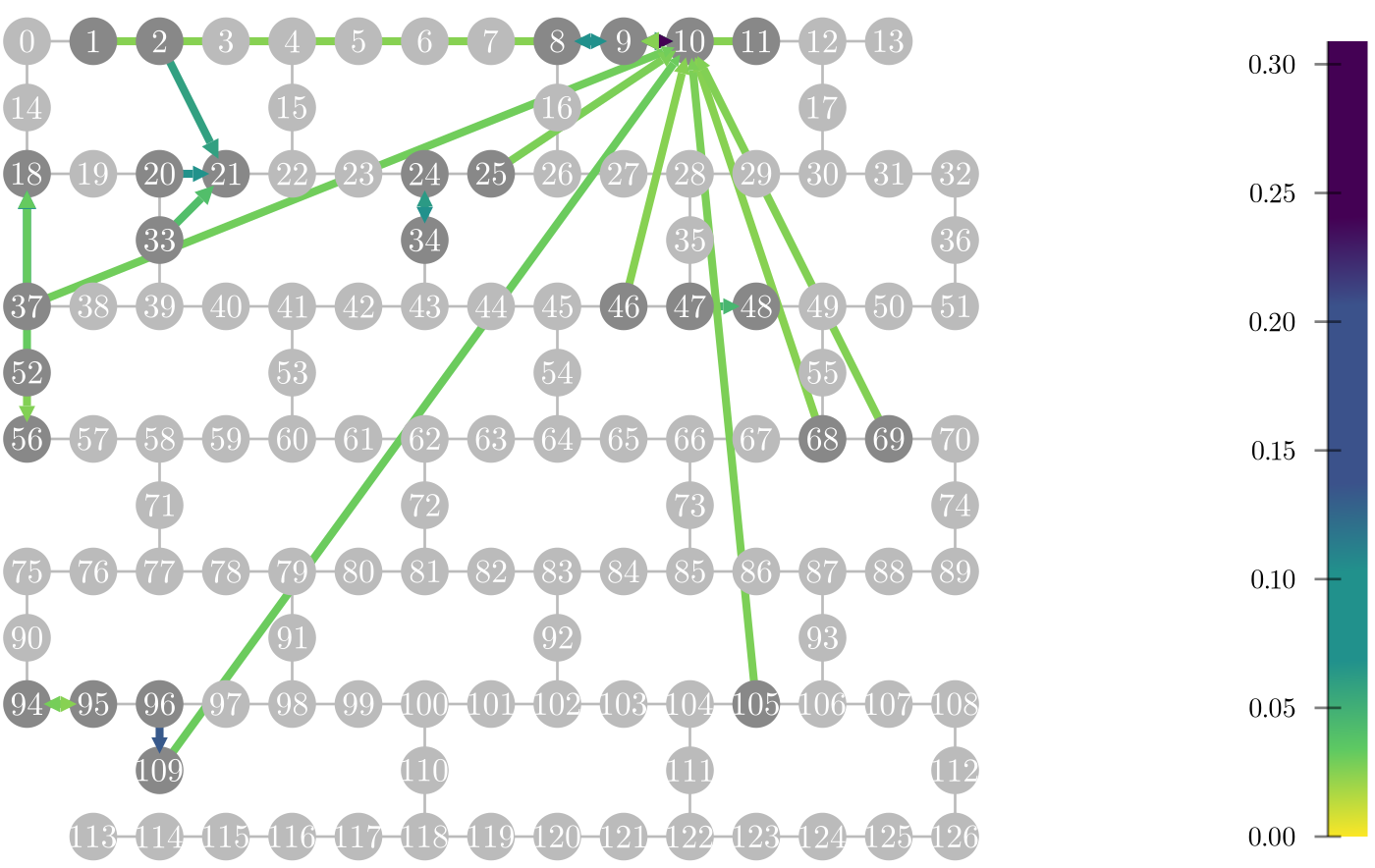} 
\includegraphics[width=0.5\textwidth]{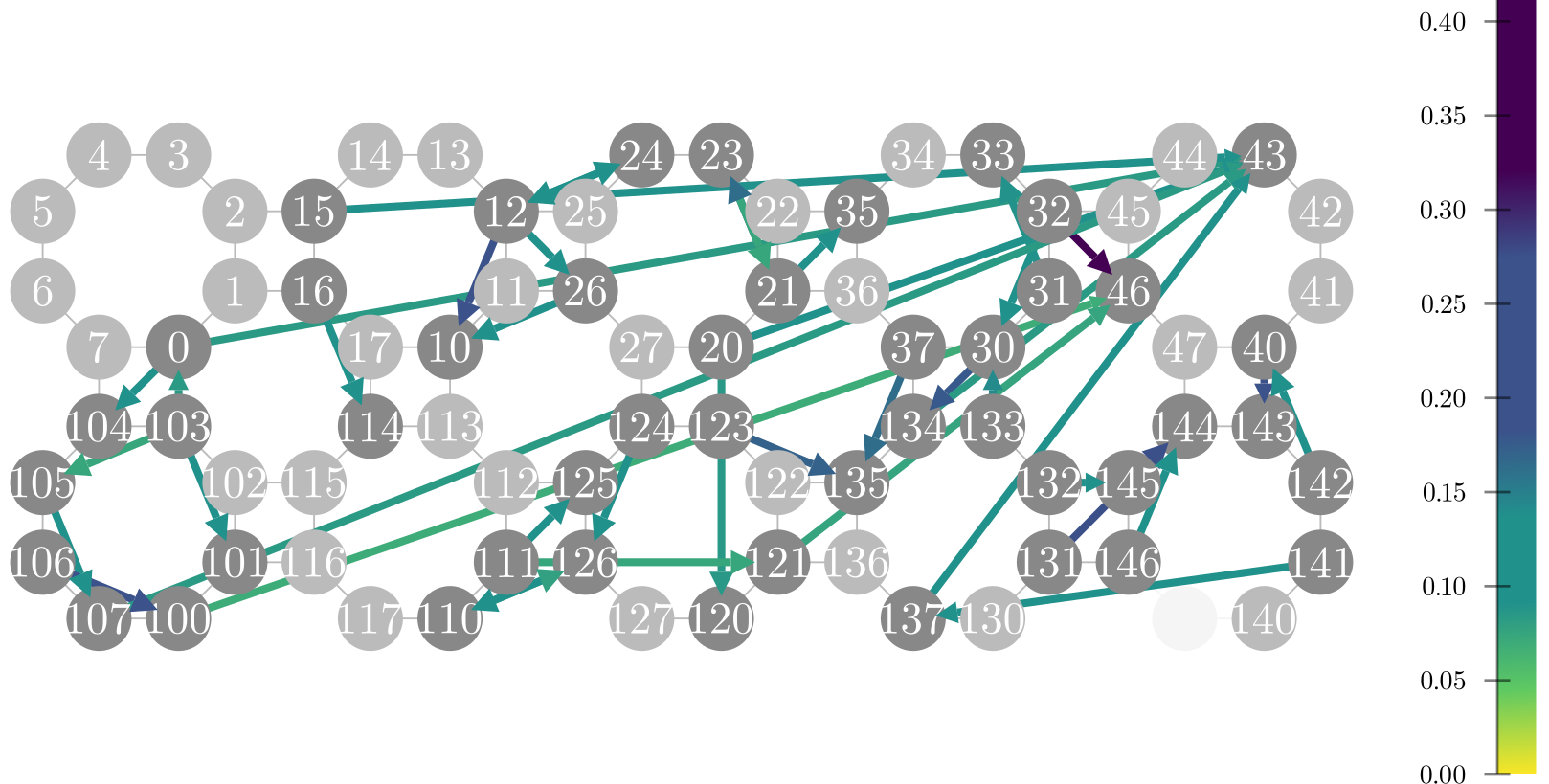}
\caption{\label{fig:connectivity_graphs} Correlation coefficients plotted on connectivity graph of IBM Cusco - left panel,
and  Rigetti Aspen-M-3 - right panel
The nodes and structure of the graph correspond to the topology of the device. 
The color of the arrows indicates correlations' strength, note different scale for both devices. Arrows between qubits indicate direction of the most profound influence. For IBM Cusco correlations higher than the threshold $ 3\%$ are shown, whereas for Rigetti Aspen-M-3 the threshold is $7\%$ 
}
\end{figure*}

\twocolumngrid 

\clearpage

\subsection{Reconstructing noise structure from detector overlapping tomography}
\label{sec:model_fitting}

Reconstruction of the correlated clusters noise model involves finding an appropriate cluster assignment. Here we describe an algorithm that returns such a cluster assignment based on the pairwise correlation coefficients  $c_{i \rightarrow j}$.


Finding cluster assignments is an optimization problem that involves a certain trade-off between the maximal acceptable size of clusters and the accuracy of the model: Accuracy requires assigning all correlated qubits to the same cluster, but sample complexity of required DOT experiments scales exponentially with the clusters' size. As a result, it is desirable to form clusters with limited number of qubits. Note that this is a different situation to that encountered e.g. in community detection problems, where in most cases one is interested in forming as large clusters as possible \cite{Fortunato2022community}.  The input data to the algorithms is the collection of the quantum correlation coefficients $c_{i \rightarrow j}$ defined in Eq. (\ref{eq:correlation_coefficient_general}), which are computed using reconstructed reduced POVMs, and  the maximal allowed clusters' size $C_{max}$. To quantify the trade-off we introduce a heuristic objective function $\phi(\mathcal{P})$, where  $\mathcal{P} = \cbracket{\cluster_{1},\dots, \cluster_{x}}$ denotes a cluster assignment. Positive contributions to this function come from a measure of a total correlation strength in a considered cluster assignment, which reflects the fact that strongly correlated qubits should belong to the same cluster to accurately describe cross-talk observed in the experimental data. To prevent formation of too large clusters, there is a negative contribution from a penalty function, which depends on a size of a cluster. Exact form of the function, as well as description of the optimization algorithm, can be found in the Appendix \ref{sec:det_clust}.

\subsection{Benchmarking correlated noise models }
\label{sec:benchmarking}

Accuracy and utility of the reconstructed noise model against experimental data is tested with two benchmarks. They are based on the energy estimation of ground states of classical local Hamiltonians (i.e. involving only Pauli $Z$ operators). These states provide a good reference as their energy can be efficiently estimated classically, and noise in their  experimental energy estimates is  dominated by readout errors (their preparation requires only a single-layer of one-qubit gates, which typically have a high fidelity). 

The first benchmark is based on energy prediction. The reconstructed noise model is used to calculate (predict) a noisy energy expectation value on a classical computer, and this quantity is compared to energy expectation value obtained from experimental results. Specifically, we calculate a normalized difference
    \begin{eqnarray}
    \label{eq:benchmark_prediction_error}
&&\Delta E_{\mathrm{PRED}}(H)=\frac{|E_{\mathrm{PRED}}(H)- E_{\mathrm{EST}}(H)|}{N}, 
\end{eqnarray}
where $E_{\mathrm{PRED}}(H)$ is the energy expected value of Hamiltonian $H$ predicted by the reconstructed noise model, and $E_{\mathrm{EST}}(H)$ is its estimator obtained from experimental data. The number of qubits in a device $N$ is included as a normalization factor to compare results obtained on devices consisting of different number of qubits.

In the second benchmark the reconstructed correlated clusters noise model is used to perform error mitigation, by a direct computation of an appropriate inverse of the noise matrix (cf. Eq. (\ref{eq:noise_model_correlated})). For a detailed discussion of those computations see \cite{Maciejewski2020mitigation}, where all issues of the approach, such as no guarantee that the inverted noise matrix is stochastic,  are addressed. In the benchmark a difference between error-mitigated energy expectation value and the ideal energy of the Hamiltonians - $\Delta E_{\mathrm{MIT}}(H)$ is compared to a difference between noisy energy expectation value (without error mitigation) and the ideal one $\Delta E_{\mathrm{EST}}(H)$. More specifically :
    \begin{eqnarray}
\label{eq:benchmark_mitigation_error}
&&\Delta E_{\mathrm{MIT}}(H)=\frac{| E_{\mathrm{MIT}}(H)- E_{\mathrm{TH}}(H)|}{N}\ , 
\end{eqnarray}
where $E_{\mathrm{TH}}(H)$ is the theoretical value of energy expectation value (estimated on a classical computer), and $\Delta E_{\mathrm{EST}}(H)$ is defined be replacing $E_{\mathrm{MIT}}(H)$ with the energy estimate  
$E_{\mathrm{EST}}(H)$ (i.e. the one without error mitigation) in the formula above.

\section{Experimental results}

Here we present experimental results concerning implementation of DDOT protocol, and subsequent characterization of the correlated readout noise, as well as benchmarks of the reconstructed CN noise model. The DDOT experiments were performed on IBM 127 qubits devices Cusco, Nazca and Brisbane, and on Riggeti Aspen-M-3 79 qubits device. The details of experiments are summarized in Sec.  \ref{app:experimental_details}. Since the IBM devices the results are qualitatively similar, here we discuss only those for IBM Cusco to keep the presentation concise. The experiments were designed and implemented, as well data analysis was performed using QREM Package, which provides a versatile set of tools for the characterization and mitigation of readout noise in NISQ devices \cite{qrem}. The code and experimental data allowing to reproduce the analysis presented in this section is available in the online repository  \cite{digital_apendix}.

\subsection{Experimental analysis of single and two qubit readout noise}
\label{sec:onetwoqubit}

The first stage in post-processing of DDOT data is the reconstruction of the reduced one- and two-qubit POVMs. To quantify their fidelity a distance to respective computational basis measurements is computed. The results for worst case distance are presented in the left panel of Fig. \ref{fig:basic_properties}.  The main conclusion is that there is a significant readout noise as a considerable fraction of reconstructed reduced POVMs deviates substantially from computational basis projective measurements.

Subsequently, the reduced two-qubit POVMs are used  to compute classical correlation coefficients (cf. Eq. ( \ref{eq:correlation_coefficient_general})) for all pair of qubits. A cumulative histogram of the correlation coefficients for both devices is presented in the right panel of Fig.~\ref{fig:basic_properties}. In the case of IBM Cusco majority of the correlation coefficients is smaller or equal than $1\%$, and the maximal value is approximately 30\% (for qubits 9 and 10). For Rigetti Aspen-M-3  majority of correlations coefficients is smaller than $5\%$, and the maximal value is approximately 41\% (for qubits 32 and 46). To provide information about spatial character of the correlations in the readout noise we present correlation coefficients drawn on the connectivity graphs of both devices (Fig.~\ref{fig:connectivity_graphs}).
The main conclusion is that the pairwise readout error correlations are not restricted to qubits that are close on a connectivity graph. 

We also investigated coherent strength by computing lower bound on coherence strength for two-qubit POVMs (Eq. (\ref{eq:cs_lower_bound})). The results are presented in Fig \ref{fig:cs_bound} . In the case of IBM Cusco there is no indication of coherent errors in two qubit POVMs, as the value of the bound is within the estimation error. The situation is different in the case of Rigetti Aspen-M-3, where for qubit parir (41,43) the lower bound is approximately $0.1$, and the estimation error is $\sim 0.03$. This allows us to conclude that coherent readout errors are present on Rigetti Aspen-M-3 device.

\subsection{Noise models reconstruction}

We implemented the algorithm described in Sec. \ref{sec:model_fitting} to reconstructed CN noise models form experimental data. For each device we reconstructed noise models corresponding to maximal cluster sizes $\RaggedRight C_{\mathrm{max}} \in \{2,3,4\}$. The reconstructed noise models were benchmarked to test their utility. Performance of CN noise models with neighbours Eq. (\ref{eq:noise_model_correlated}) was similar to those without neighbours Eq. (\ref{eq:noise_model_correlated_no_neighbors}), and we present the results only for the latter.     

\subsection{Noise model benchmarks}

We implemented 300 ground states of local Hamiltonians on IBM Cusco, and  50 on Riggeti M3. In all cases the 2-local Hamiltonians were used.  Here we focus on the energy mitigation benchmark. Results of the energy prediction benchmark are discussed in Appendix \ref{sec:add_experiment_details}.

In Fig. \ref{fig:error_mitigation} the results of the error mitigation benchmark are presented.  In each plot there are three curves: The blue one shows data for the CN noise model, the purple Tensor Product Noise (TPN) model, and the green one shows data without error mitigation. Both plots show that the value of the normalized energy expectation value error is much smaller for the error mitigated data. For IBM Cusco we observe that the median of $\Delta E_{\mathrm{MIT}}(H)$ is smaller by at least a factor of $10$, whereas for Rigetti Aspen-M-3 by at least $5.9$. This significant improvement in accuracy of energy expectation value estimation implies that reconstructed correlated clusters noise model captures correlations in the readout noise, and can be used as an input for error mitigation techniques based on knowledge of noise structure. To verify the influence of cross-talk on the results of the benchmark, we also reconstructed the TPN model, in which correlations in the readout noise are not taken into account. We found that error mitigation based on reconstructed correlated clusters noise model performs better than that based on TPN: in the case of IBM Cusco the median of $\Delta E_{\mathrm{MIT}}(H)$  is smaller by $23\%$, whereas for Rigetti Aspen-M-3 by $20\%$. This significant improvement implies that to fully utilize the potential of readout-errors mitigation techniques, it is crucial to characterize cross-talk effects, and it is not sufficient to focus on uncorrelated readout errors.

\clearpage

\onecolumngrid

\begin{figure*}
\includegraphics[width=0.48\textwidth]{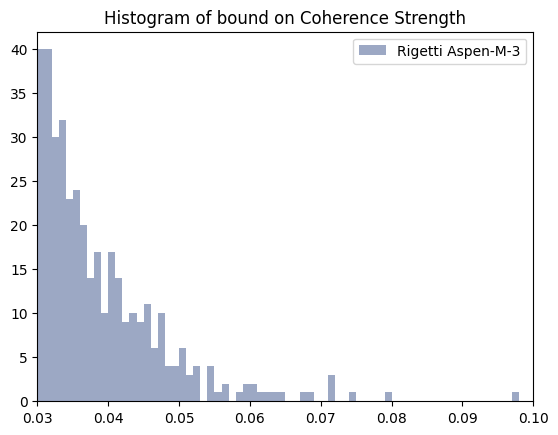} 
\caption{Histogram of coherence strength bound on two-qubit POVMs for Riggeti Aspen-M-3. The results below estimation error, which is on the order of $0.03$ are not shown. Details of estimation error are discussed in Sec. \ref{sec:app_cs_bound} For IBM Cuso all results are within the estimation error.}
\label{fig:cs_bound}
\end{figure*}

\begin{figure*}
   \includegraphics[width=0.48\textwidth]{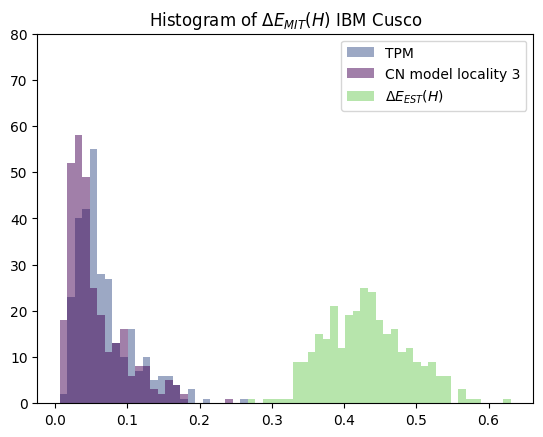}
   \includegraphics[width=0.48\textwidth]{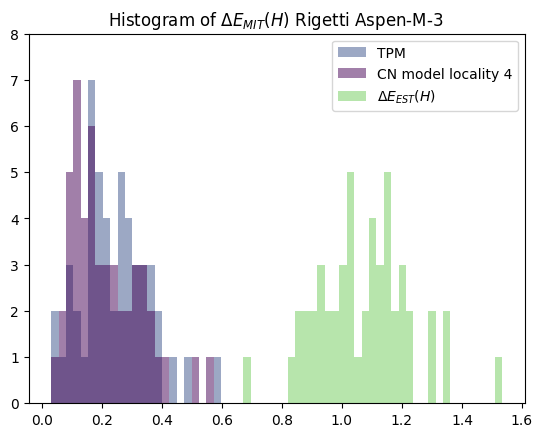}
\caption{\label{fig:error_mitigation} 
 Results of error-mitigation benchmark for \qibm-qubits on IBM Cusco - left panel, 
 and \qrig-qubits on Rigetti’s Aspen-M-3 - right panel. 
 In both cases a distribution of error in energy estimation for error mitigation based on product noise model (light purple), correlated clusters noise model (dark purple), as well as distribution of error in energy estimation for raw (unmitigated) data (light green) is presented. The error is calculated according to Eq. (\ref{eq:benchmark_mitigation_error}).
 The histograms were created from 300 Hamiltonians for IBM Cusco, and 50 Hamiltonians of the same locality Aspen-M3. In both cases Hamiltonians were built from single and two-body terms involving Pauli $Z$ operators. }
\end{figure*}

\twocolumngrid

\section{Discussion}

In this work we introduced a rigorous cross-talk readout noise characterization protocol based on Quantum Detector Overlapping Tomography procedure. The protocol is efficient as sample complexity of required quantum experiment scales logarithmicaly with the number of qubits, and can be applied to large-scale quantum devices. The quantum circuits implementing Quantum Detector Overlapping Tomography require only one-layer of one-qubit gates. The protocol provides various characteristics of the $k$-local cross-talk effects in the readout noise, and allows reconstruction of the CN readout noise model.  Therefore, the protocol has a potential to become a standard readout noise characterization subroutine for NISQ devices. We implemented the protocol on large-scale IBM and Rigetti devices, performed noise characterization, and reconstructed their respective CN noise models. For both devices we demonstrated also the utility of the CN noise models for readout error mitigation, and showed that error mitigation based on CN model performs better than the one based on the TP model,  with the improvement on the order of 20\%.  These results suggest that the increase in quantum resources needed for CN model reconstruction gives rise to a significant improvement of the readout errors mitigation efficiency, as compared to TP model. It is worth to mention the efficiency of the protocol implementation can be further optimized: Determination of clusters structure is based on correlation-coefficients, which are established using experimental data form QDOT protocol of locality 2. Therefore, it is possible to implement the CN model reconstruction protocol in a hybrid way: In the first step execute the QDOT protocol of locality 2, subsequently determine the cluster structure, then, if needed, perform additional tomography to estimate noise matrices associated with the reconstructed clusters. Such an approach would further reduce the number of quantum circuits used in the characterization protocol.

To the best of our knowledge this is the first readout cross-talk characteristic of large scale NISQ devices, which was made possible by development of theoretical tools presented here. In comparison previous studies such as e.g. \cite{Geller2020efficient,Maciejewski2021modeling} dealt with devices consisting of significantly smaller numbers of qubits.

This work opens further research directions. Most importantly, it would be interesting to apply similar reasoning to characterization of the cross-talk noise affecting quantum gates. In this case one can consider spatial and temporal corretaltions: Cross-talk in gates applied to different qubits, as well as a temporal (non-Markovain) errors that affect sequence of quantum gates applied to a fixed group of qubits.     

\begin{acknowledgments}
This work was supported by the TEAM-NET project co-financed by the EU within the Smart Growth Operational Program (contract no. POIR.04.04.00-00-17C1/18-00). 
We acknowledge the use of IBM Quantum services for this work. The views expressed are those of the authors, and do not reflect the official policy or position of IBM or the IBM Quantum team. We acknowledge AWS Cloud Credit for Research program that allowed us to perform experiments on Rigetti Aspen-M-3. We thank Ingo Roth and Martin Kliesch for discussions. We ackowlege use of Manim package \cite{manim} in preparation of Fig. \ref{fig:connectivity_graphs}.
\end{acknowledgments}

\section{Methods}

 \subsubsection{Distances between quantum measurements}\label{sec:background_distances}


To quantify quality of measurements' implementation, we use the worst-case (or operational) distance, which for POVMs $\M$ and $\P$ is defined as 
 \begin{align}
 \label{eq:operational_dist}
     \dist_{WC}\left(\M,\P\right) = \sup_{\rho} \tvd\left(\p(\M,\rho),\ \p(\P,\rho)\right) \ ,
 \end{align}
 where  $\tvd\left(\p(\M,\rho),\ \p(\P,\rho)\right)$ is Total Variation Distance between probability distributions induced by measurements $\M$ and $\P$ on $\rho$, which is defined as   
 \begin{align}\label{eq:tvd}
     \tvd\left(\p,\ \mathbf{q}\right) = 
     \frac{1}{2} \sum_{i} |p_i-q_i| \ .
 \end{align}
The operational distance quantifies the highest possible difference in statistics generated by two measurements over all possible states.

We also use a recently proposed alternative to the worst-case distance, the average-case distance \cite{MaciejewskiAverageCaseDistance,MaciejewskiAverageCaseDistanceProofs} defined as 
\begin{align} 
\label{eq:average_dist}
    \dist_{AV}\left(\M,\P\right) = \frac{1}{2d}\sum_{i}\sqrt{||M_i-P_i||_{\text{HS}}^2+\tr^2\left(M_i-P_i\right)}.
\end{align}
Due to the fact that $\dist_{AV}\left(\M,\P\right)\approx \expect{\psi}\  \tvd\left(\p(\M,\psi),\ \p(\P,\psi)\right)\ $ ( see \cite{MaciejewskiAverageCaseDistanceProofs} for details), the average-case distance quantifies average difference in statistics induced by two quantum measurements w.r.t. pure states.

\subsubsection{Lower bound on Coherence Strength}
\label{sec:app_cs_bound}
In this section Proposition \ref{prop:cs_lower_bound} on a lower bound on coherence strength of noisy POVM $\M$ is proved. 

\textbf{Proposition \ref{prop:cs_lower_bound}} \textit{For a POVM $\M$ and quantum states $|p\rangle, |q\rangle$ constructed from tensor product of Pauli $X,Y$ eigenstates the following bound holds} 
\begin{eqnarray}
    \mathrm{CS}(\M)^{AC} \geq \frac{\tvd \left( \Pr(p), \Pr(q)   \right)}{d \| \delta_{p,q} \|_{HS}},
\end{eqnarray}
\textit{where $\Pr(p)$ ($\Pr(q)$) is probability distribution obtained by measuring $|p\rangle$ ($|q\rangle$)  with $\M$, and $\delta_{p,q} =  \left( | p \rangle \langle p | - | q \rangle \langle q | \right ) $, and $d$ is  dimension of the corresponding Hilbert space.} 


\begin{proof}
Let us decompose effects of $\M$ into diagonal and off-diagonal part $M_i =M_i^{diag} + \Delta_i$. For a particular outcome $i$ it holds

\begin{eqnarray}
    &&\Pr(i|p) - \Pr(i|q) = \tr \left( M_i \left( | p \rangle \langle p | - | q \rangle \langle q | \right) \right) = \\&& \tr \left( \Delta_i \left( | p \rangle \langle p | - | q \rangle \langle q | \right ) \right) =\tr \left( \Delta_i \delta_{p,q} \right) \leq \| \Delta_i \|_{HS} \| \delta_{p,q} \|_{HS}, \nonumber
\end{eqnarray}
where the second equality follows from the fact that the states $| p \rangle \langle p |, | q \rangle \langle q |$ differ only on off-diagonal elements. Using this fact we can write 

\begin{eqnarray}
     &&\tvd \left( \Pr(p), \Pr(q)   \right) \leq \frac{1}{2} \sum_i \sqrt{ \| \Delta_i \|_{HS}^2} \sqrt{\| \delta_{p,q} \|^2_{HS}} \nonumber \\ && \leq \frac{1}{2}  \| \delta_{p,q} |_{HS} \sum_i  \sqrt{  \| \Delta_{i} |^2_{HS} + \tr \left( \Delta_{i}  \right)^2} \nonumber \leq \\ 
    && d \times \| \delta_{p,q} |_{HS} \times  \dist_{AV} (\M, \Phi_{\text{dep}}(\M) ),
\end{eqnarray}
where the last inequality follows from the definition of average case distance. By using definition of coherence strength $\mathrm{CS}(\M)^{AC}$  the bound is obtained.

\end{proof}

\begin{prop}
Consider $\hat{\mathrm{CS}}(\M)^{AC}$ that is an empirical estimation of $\mathrm{CS}(\M)^{AC}$. It holds that
\begin{eqnarray}
\label{eq:cs_experimental}
  \left|   \mathrm{CS}(\M)^{AC} -  \hat{\mathrm{CS}}(\M)^{AC}  \right| \leq  \frac{\epsilon_{p}+ \epsilon_{q}}{d}.
\end{eqnarray}
where  $\epsilon_{p}$, $\epsilon_{q}$ are upper bounds on distances between true distributions $\Pr(p), \Pr(q)$, and their empirical values $\hat{\Pr}(p), \hat{\Pr}(q)$.        
\end{prop}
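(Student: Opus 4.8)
The plan is to treat $\hat{\mathrm{CS}}(\M)^{AC}$ as the quantity obtained from the measurable lower bound of Proposition \ref{prop:cs_lower_bound} by replacing the true outcome distributions $\Pr(p),\Pr(q)$ with their empirical counterparts $\hat{\Pr}(p),\hat{\Pr}(q)$, so that the true value and its estimate differ only through the argument of a single total variation distance sharing a common prefactor. Concretely, I would first write
\begin{equation}
\left|\mathrm{CS}(\M)^{AC}-\hat{\mathrm{CS}}(\M)^{AC}\right|
=\frac{\left|\tvd\!\left(\Pr(p),\Pr(q)\right)-\tvd\!\left(\hat{\Pr}(p),\hat{\Pr}(q)\right)\right|}{d\,\|\delta_{p,q}\|_{HS}},
\end{equation}
reducing the entire problem to controlling the difference of two total variation distances with the same normalization $1/(d\,\|\delta_{p,q}\|_{HS})$.

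The core step is the reverse triangle inequality for the metric $\tvd$. Since $\tvd$ obeys the triangle inequality, I would insert $\hat{\Pr}(p)$ and $\hat{\Pr}(q)$ as intermediate points and obtain, in both directions,
\begin{equation}
\left|\tvd\!\left(\Pr(p),\Pr(q)\right)-\tvd\!\left(\hat{\Pr}(p),\hat{\Pr}(q)\right)\right|
\leq \tvd\!\left(\Pr(p),\hat{\Pr}(p)\right)+\tvd\!\left(\Pr(q),\hat{\Pr}(q)\right).
\end{equation}
By the defining property of $\epsilon_{p},\epsilon_{q}$ as upper bounds on the statistical distance between each true distribution and its empirical estimate, the right-hand side is at most $\epsilon_{p}+\epsilon_{q}$. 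Here one must be slightly careful to match conventions, ensuring that the $\epsilon$'s indeed bound the total variation (rather than some other) distance appearing in the estimator.

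The remaining point, which I expect to be the only genuine subtlety, is the normalization factor $\|\delta_{p,q}\|_{HS}$. Since $|p\rangle,|q\rangle$ are distinct tensor products of $X,Y$ eigenstates, the pairwise factor overlaps are either $0$ or $1/2$, so $\|\delta_{p,q}\|_{HS}^{2}=2\left(1-|\langle p|q\rangle|^{2}\right)\geq 1$ whenever $p\neq q$; hence $\|\delta_{p,q}\|_{HS}\geq 1$ and the prefactor $1/(d\,\|\delta_{p,q}\|_{HS})$ is at most $1/d$. Combining this with the two displayed inequalities yields
\begin{equation}
\left|\mathrm{CS}(\M)^{AC}-\hat{\mathrm{CS}}(\M)^{AC}\right|\leq \frac{\epsilon_{p}+\epsilon_{q}}{d},
\end{equation}
as claimed. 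If instead one retains the factor explicitly, the same argument gives the sharper bound $(\epsilon_{p}+\epsilon_{q})/(d\,\|\delta_{p,q}\|_{HS})$, from which the stated inequality follows by monotonicity; verifying that the chosen states satisfy $\|\delta_{p,q}\|_{HS}\geq 1$ is thus the main thing to check carefully.
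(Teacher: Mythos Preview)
Your argument is correct and follows essentially the same route as the paper: the paper's proof invokes the reverse triangle inequality for $\tvd$ to get $\left|\tvd(\Pr(p),\Pr(q))-\tvd(\hat{\Pr}(p),\hat{\Pr}(q))\right|\leq \epsilon_p+\epsilon_q$ and then simply states that the claimed bound follows. Your version is actually more careful than the paper's in that you explicitly verify $\|\delta_{p,q}\|_{HS}\geq 1$ for distinct tensor products of $X,Y$ eigenstates, a step the paper leaves implicit when passing from the $\tvd$ inequality to the final bound with only the factor $1/d$.
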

\begin{proof}
From the fact that $\tvd \left( \Pr(p), \hat{\Pr}(p)   \right) \leq \epsilon_{p}$, and  $\tvd \left( \Pr(q),  \hat{\Pr}(q) \right) \leq \epsilon_{q}$ it follows that
\begin{eqnarray}
    && \left| \tvd \left( \Pr(p), \Pr(q)   \right) - \tvd \left( \hat{\Pr}(p), \hat{\Pr}(q)   \right) \right|    \leq \epsilon_{p} + \epsilon_{q}. 
\end{eqnarray}
From the above inequality one obtains the bound (\ref{eq:cs_experimental}).
\end{proof}

The probabilistic upper bounds on $\epsilon_{p}, \epsilon_{q}$ are obtained using the results of \cite{Weissman2003}:

\begin{eqnarray}
       \tvd\left( \Pr(p), \hat{\Pr}(p)   \right) \leq \epsilon^*_{p}= \sqrt{\frac{\log\rbracket{2^d-2}-\log{\rbracket{P_{ \text{err}}}}}{2 N_S}} ,
\end{eqnarray}
where $N_S$ is the number of shots used in the estimation, and  $P_{ \text{err}}$ denotes the error probability:  $\text{Pr}\rbracket{\tvd \left( \Pr(p), \hat{\Pr}(p)   \right) \geq \epsilon_p}$.

\onecolumngrid

\subsubsection{Details of experiments}
\label{app:experimental_details}
\begin{table}[!h]
\begin{tabular}{|c|c|c|c|c|c|c|}
\hline
Device      & \begin{tabular}[c]{@{}c@{}}number of \\  qubits\end{tabular} & \begin{tabular}[c]{@{}c@{}} experiment \\ date \end{tabular} & \begin{tabular}[c]{@{}c@{}} number of \\ DDOT circuits \end{tabular} & \begin{tabular}[c]{@{}c@{}}  number of \\ benchmark circuits \end{tabular} &  \begin{tabular}[c]{@{}c@{}}  number of \\ cohernce witness circuits \end{tabular} & \begin{tabular}[c]{@{}c@{}} number of \\ shots \end{tabular} \\ \hline
IBM Cusco & 127                                                                      & 29-09-2023   & 1200  & 300 & 88                & 10 000              \\ \hline
ASPEN--M--3     & 79                                                                         & 25-09-2023   & 300  & 50 & 16                 & 10 000              \\ \hline
\end{tabular}
\caption{Details of experiments on IBM's and Rigetti's devices.}
\label{tab:data_details}
\end{table}

\clearpage

\bibliography{ref}

\appendix
\section{Details of overlapping detector tomography}
In this section we provide detailed analysis of the Detector Overlapping Tomography protocol. In Sec. \ref{sub:dotmatrixel} we derive sampling complexity of the matrix entries of the reduced POVM effects. In Sec. \ref{sub:dotchoi} we use those results to derive sample complexity of a tomographic reconstruction of the Choi-Jamiolkowski state corresponding to the reduced measurement channel. Finally in Sec. \ref{sub:dotdistances} we present relations leading to Theorem 2 of the main text, which involve sample complexity of the Choi-Jamiolkowski state reconstruction and average, as well as worst-case distances.
\label{app:DOT}
\subsection{Sample complexity of reduced noise channel matrix elements estimation}
\label{sub:dotmatrixel}
Here derivation of the sampling complexity of estimating matrix elements of reduced POVM $\M_{\x_A,\y_A}^{A}$ is presented. The first step consists of preparation of input states to overlapping detector tomography procedure (for detailed description see Algorithm \ref{alg:DOT_rand}). To this end a random set $\Y^{N_{Mes}}$ of $N_{Mes}$ strings of length $N$ is generated. Each string from this set encodes one input state. If a six letter alphabet was used to generate the set then the input states consist of eigenstates of Pauli operators(this is full DOT protocol), whereas a two letter alphabet corresponds to computational basis states for DDOT protocol. Results of all measurements are stored in the set $\X^{N_{Mes}}$. For a chosen $k$-qubit subset $A$, $\hat{M}^{A}_{\mathbf{x}_A,\mathbf{y}_A}$  corresponding to a measurement process, in which a state $\y_A$ on $A$ was prepared and an outcome $\x_A$ observed, is estimated as:  
\begin{eqnarray}
    \label{eq:Lambda_matrix_el}
    \hat{M}_{\mathbf{x}_A,\mathbf{y}_A}^{A} = \frac{1 }{h_{\y_A}(\Y^{N_{Mes}})} \sum_{\y^i} q_{\x^i_{A},\y^i_A},
\end{eqnarray}
where 
\begin{equation}
q_{\x^i_A,\y^i_A} = 6^k \delta_{\x_A,\x^i_A} \delta_{\y_A,\y^i_A}.    
\end{equation}
is a random variable computed from experimental data, and
$h_{\y_A}(\Y^{N_{Mes}}) = \sum_{i=1}^{N_{Mes}} 1 \{ \y_A = \y^i_A \} = |\Y^{N_{Mes}}_{\y_A}|$, where $\Y^{N_{Mes}}_{\y_A} \subset \Y^{N_{Mes}}$ is a subset whose elements realize the chosen bit-string $\y_A$ on $A$.

The empirical average $f_{\x_A,\y_A}^{A}$ can be used to approximate $\M_{\x_A,\y_A}^{A}$ since the random variable $q_{\x^i_A,\y^i_A}$ is an unbiased estimator of $\M_{\x_A,\y_A}^{A}$: 
  \begin{eqnarray}
      &&\mathbb{E}_{\x,\y} q_{\x^i_A,\y^i_A}= \sum_{\x,\y}  \text{Pr}(\x,\y) q_{\x^i_A,\y^i_A} = \sum_{\x,\y} \text{Pr}(\y)   \text{Pr}(\x|\y) q_{\x^i_A,\y^i_A} =  
      \\&&  \sum_{\y_{\bar{A}}}\frac{1}{6^{N-k}} \sum_{\x_{\bar{A}}}\text{tr} \left( |\x_A \rangle \langle \x_A| \otimes |\x_{\bar{A}} \rangle \langle \x_{\bar{A}}|   \M \left( |\y_A \rangle \langle \y_A| \otimes |\y_{\bar{A}} \rangle \langle \y_{\bar{A}}|   \right) \right)  \nonumber = \\&&\text{tr}_A \left( |\x_A \rangle \langle \x_A|     \sum_{\x_{\bar{A}}} \langle \x_{\bar{A}}| \M \left( |\y_A \rangle \langle \y_A| \otimes \sum_{\y_{\bar{A}}}\frac{1}{6^{N-k}}|\y_{\bar{A}} \rangle \langle \y_{\bar{A}}|   \right) |\x_{\bar{A}} \rangle \right) \nonumber = \\&&
      \text{tr} \left( |\x_A \rangle \langle \x_A| \M^{A} \left( |\y_A \rangle \langle \y_A| \right) \right) = \M^{A}_{\x_A,\y_A},  \nonumber
  \end{eqnarray}
where we used the Born rule, the fact that in a QDOT experiment probability of randomly choosing a string $\y$ of $N$ length from a six  symbol alphabet is $\text{Pr}(\y_{\bar{A}})=\frac{1}{6^{N}}$ (in the case of DDOT protocol the alphabet has two symbols so $\frac{1}{2^{N}}$), and  $\sum_{\y_{\bar{A}}} |\y_{\bar{A}} \rangle \langle \y_{\bar{A}}| = 3^{N-k} I_{\bar{A}} $ ($\sum_{\y_{\bar{A}}} |\y_{\bar{A}} \rangle \langle \y_{\bar{A}}| =  I_{\bar{A}}$ for DDOT). \\
We also compute variance of the random variable $q_{\x^i_A,\y^i_A}$ 
\begin{eqnarray}
      &&\mathbb{V}_{\x,\y} q_{\x^i_A,\y^i_A} = 6^{k} \M^{A}_{\x_A,\y_A} - \left(\M^{A}_{\x_A,\y_A} \right)^2 
\end{eqnarray}

  We are now in position to prove Theorem 1 from the main text.

\textbf{Theorem 1 } \textit{Experimental DOT data allow to estimate $M_{\x_A,\y_A}^{A}$ (matrix elements of a POVM effect) for all possible choices of the $k$ qubit subset A and within $\epsilon$ error if the number $N_{Mes}$ of circuits used in the experiment scales as $O\left(\frac{k 6^k \log (N)}{\epsilon^2} \right)$. In the case of DDOT the scaling is $O\left(\frac{k 2^k \log (N)}{\epsilon^2} \right)$   }

\textit{Proof.} 
We start by bounding the probability of failure of joint estimation of matrix elements in a QDOT experiment:
\begin{eqnarray}
\label{eq:DOT_union_bound}
    \text{Pr} \left[ \exists_{\x_A,\y_A} \left| \hat{M}_{\x_A,\y_A}^{A} - M_{\x_A,\y_A}^{A} \right| \geq \epsilon \right].
\end{eqnarray}
We first use the fact that the total probability of an experiment can be decomposed into probability of choosing a particular circuit $\tilde{\y}$ and probability of obtaining a particular result from set $\X^{N_{Mes}}$ conditioned on the circuit choice. We firstly use Boole's inequality 
\begin{eqnarray}
   \text{Pr}_{\X^{N_{Mes}}} \left[ \exists_{\x_A} \left| \hat{M}_{\x_A,\y_A}^{A} - M_{\x_A,\y_A}^{A} \right| \geq \epsilon | \tilde{y} \right] \leq 2^k \binom{N}{k}    \text{Pr}_{\X^{N_{Mes}}} \left[ \left| \hat{M}_{\x_A,\y_A}^{A} - M_{\x_A,\y_A}^{A} \right| \geq \epsilon | \tilde{y} \right].
\end{eqnarray}
The right hand side can be bounded using Hoeffding's inequality
\begin{eqnarray}
    \text{Pr}_{\X^{N_{Mes}}} \left[ \left| \hat{M}_{\x_A,\y_A}^{A} - M_{\x_A,\y_A}^{A} \right| \geq \epsilon | \tilde{y} \right] \leq  2 \exp \left(-2 \epsilon^2 |\Y^{N_{Mes}}_{\y_A}| \right)
\end{eqnarray}

Combination of this two steps gives 
\begin{eqnarray}
   \text{Pr}_{\X^{N_{Mes}}} \left[ \exists_{\x_A} \left| \hat{M}_{\x_A,\y_A}^{A} - M_{\x_A,\y_A}^{A} \right| \geq  \epsilon | \tilde{y} \right] \leq 2 \times 2^k \binom{N}{k} \exp \left(-2 \epsilon^2 |\Y^{N_{Mes}}_{\y_A}| \right).
\end{eqnarray}
Our aim now is to estimate the cardinality of the set $\Y^{N_{Mes}}_{\y_A}$. We firstly use the fact that the experiments are independent and $|\Y^{N_{Mes}}_{\y_A}|$ factorizes into individual contributions. 
\begin{eqnarray}
\label{eq:DOT_complexity_derivation}
&&\text{Pr}_{\Y^{N_{Mes}}}\text{Pr}_{\X^{N_{Mes}}} \left[  \exists_{\x_A} \left| \hat{M}_{\x_A,\y_A}^{A} - M_{\x_A,\y_A}^{A}  \right| \geq \epsilon \right] \leq 2 \times  2^k \binom{N}{k}2  \text{Pr}_{\Y^{N_{Mes}}} \exp \left(-2 \epsilon^2 |\Y^{N_{Mes}}_{\y_A}| \right) = \\&& \nonumber 2 \times  2^k \binom{N}{k} \prod_{i=1}^{N_{{Mes}}}\text{Pr}_{\Y^{N_{Mes}}}  \exp \left(-2 \epsilon^2 1 \{ \y_A = \y^i_A \} \right)
\end{eqnarray}
Subsequently we can evaluate expectation value
\begin{eqnarray}    
&&\text{Pr}_{\Y^{N_{Mes}}}\text{Pr}_{\X^{N_{Mes}}} \left[  \exists_{\x_A} \left| \hat{M}_{\x_A,\y_A}^{A} - M_{\x_A,\y_A}^{A}  \right| \geq \epsilon \right] \leq  2 \times  2^k \binom{N}{k} \prod_{i=1}^{N_{{Mes}}}\text{Pr}_{\Y^{N_{Mes}}}  \exp \left(-2 \epsilon^2 1 \{ \y_A = \y^i_A \} \right) \leq \\ && \nonumber  2 \times 2^k \binom{N}{k}  \left(\frac{1}{6^k} \exp \left(-2 \epsilon^2 \right) + \left( 1 - \frac{1}{6^k} \right) \right)^{N_{Mes}}  = 2 \times 2^k \binom{N}{k} \left( 1 - \frac{1}{6^k}\left( 1 - \exp(-2\epsilon^2)\right) \right)^{N_{Mes}} \leq \\ && \nonumber 2 \times 2^k \binom{N}{k}  \exp\left(- \frac{(1-\exp(-2\epsilon^2))N_{Mes}}{6^k} \right),
\end{eqnarray}
and, by introducing an upper bound for confidence we find that the number of circuits  required to estimate matrix elements of POVM effects with $\epsilon$ accuracy and $1-\delta$ confidence scales as 
  \begin{eqnarray}
      N_{Mes} = \frac{6^k}{1-\exp(-2\epsilon^2)} \left[ (k+1) \log 2  + \log \binom{N}{k} +\log\frac{1}{\delta}  \right]. \nonumber \\
  \end{eqnarray}
 In the case of a DDOT  one needs to replace the factor $6^k$ with a factor $2^k$ in Eq. (\ref{eq:DOT_union_bound}), and accordingly in Eq. (\ref{eq:DOT_complexity_derivation}). The sample complexity is then

  \begin{eqnarray}
      N_{Mes}  = \frac{2^k}{1-\exp(-2\epsilon^2)} \left[ (k + 1)\log 2 +  \log \binom{N}{k} +\log\frac{1}{\delta}  \right], \nonumber \\
  \end{eqnarray}
which gives the scaling reported in the main text \qedsymbol

\begin{algorithm*}
\small
 \caption{Single-shot estimation POVM estimation algorithm \label{alg:DOT_rand}$\qquad \ $}
  \RaggedRight\textbf{Input}:\\
  \RaggedRight $N$: number of all qubits\\
  \RaggedRight $k$: size of the marginal\\
\begin{enumerate}
    \settowidth\tablen{$\quad$}
    \item Generate a $N_{Mes}$ strings $\y^i$ of length $N$, whose entries are randomly chosen from 6 elements (2 elements for DDOT experiment). The set of all those strings is denoted as $\Y^{N_{Mes}}$.  
	\item \textbf{For} each string $\y^i$, \textbf{do}:\\
    \tabbox{$\quad$} Construct a quantum circuit that prepares $| \y^i \rangle$ and measure it in the computational basis. \\    \tabbox{$\quad$} Store the result $\x^i$ (the set of all measurement outcomes is denoted as $\X^{N_{Mes}}$).
\item    Choose a $k$ qubit subset $A$, as well as fix the input and the output strings $\x_A, \y_A$. Calculate a random variable
    \begin{equation}
    \label{eq:random_q}
    q_{\x^i_A,\y^i_A} = \left\{ \begin{array}{ll}
1 & \textrm{when $\x_A = \x^i_A$ and $\y_A = \y^i_A$   }\\
0 & \textrm{otherwise}
\end{array} \right.
    \end{equation}
\item Output empirical average as an approximation of $M_{\x_A,\y_A}^{A}$  
\begin{eqnarray}
    \hat{M}_{\x_A,\y_A}^{A} = \frac{1 }{h_{\y_A}(\Y^{N_{Mes}})} \sum_{\y^i_{A}} q_{\x^i_{A},\y^i_A},
\end{eqnarray}
with
\begin{equation}
h_{\y_A}(\Y^{N_{Mes}}) = \sum_{i=1}^{N_{Mes}} 1 \{ \y_A = \y^i_A \} = |\Y^{N_{Mes}}_{Y_A}|,    
\end{equation}
where $\Y^{N_{Mes}}_{\y_A} \subset \Y^{N_{Mes}}$ is a subset whose elements realize the chosen bit-string $\y_A$ on $A$.
\end{enumerate}
\end{algorithm*}

Due to technological limitations of NISQ devices a modified version of the protocol, in which quantum circuits are reused many times, is more practical. In such a setting for each choice of $\y^i$ the circuit will be implemented and measured  $S$ times.
The estimator in a given shot $s$ is constructed as 
\begin{equation}
q_{\x^{i,s}_A,\y^i_A} = 6^k \frac{1}{S} \sum_{s} \delta_{\x_A,\x^{i,s}_A} \delta_{\y_A,\y^i_A},    
\end{equation}
and then averaged over all shots in a given round
\begin{equation}
    q^S_{\x^{i}_A,\y^i_A} = \frac{1}{S} \sum_s q_{\x^{i,s}_A,\y^i_A}.
\end{equation}
For $S=1$ one recovers the standard protocol discussed above. To analyze performance of the modified protocol we will utilize results obtained in the context of classical shadows \cite{helsen2022thrifty,Zhou2023multishot}, where analogous multi-shot setting has been considered. Analogously to \cite{helsen2022thrifty,Zhou2023multishot} we use the Law of Toatal Variance to obtain the following relation  

\begin{eqnarray}
 \mathbb{V}_{\x,\y} \left(   q^{S}_{\x^{i}_A,\y^i_A} \right) = \frac{1}{S} \mathbb{V} \left(q^{S=1}_{\x^{i}_A,\y^i_A} \right) + \frac{S-1}{S} \mathbb{V}_{\y} \left( \mathbb{E}_{\x} \left( q^{S=1}_{\x^{i}_A,\y^i_A} | \y \right) \right),
\end{eqnarray}
where the first term on the r.h.s. is the variance of the estimator for the protocol without multiple shots, and the second is caused by the fact thata a single measurement setting is used multiple times. Let us firstly compute the conditional expectation value
\begin{eqnarray}
     \mathbb{E}_{\x} \left( q^{S=1}_{\x^{i}_A,\y^i_A} | \y \right) = 6^{k}   \sum_{\x_{\bar{A}}}\text{tr} \left( |\x_A \rangle \langle \x_A| \otimes |\x_{\bar{A}} \rangle \langle \x_{\bar{A}}|   \M \left( |\y_A \rangle \langle \y_A| \otimes |\y_{\bar{A}} \rangle \langle \y_{\bar{A}}|   \right) \right) \delta_{\y_A,\y_A^i}.
\end{eqnarray}
Variance over $\y$ can be then bounded as

\begin{eqnarray}
    \mathbb{V}_{\y} \left(  \mathbb{E}_{\x} \left( q^{S=1}_{\x^{i}_A,\y^i_A} | \y \right)  \right) =  \frac{6^{k}}{6^{N-k}} \sum_{\y_{\bar{A}}} \text{tr} \left( |\x_A \rangle \langle \x_A| \otimes  \sum_{\x_{\bar{A}}} |\x_{\bar{A}} \rangle \langle \x_{\bar{A}}|   \M \left( |\y_A \rangle \langle \y_A| \otimes |\y_{\bar{A}} \rangle \langle \y_{\bar{A}}|   \right) \right)^2 \leq 6^k.
\end{eqnarray}
This rough estimation already shows that the multi-shot setting does not lead to increased variance of the estimator, and more detailed calculations show that there exit circuits for which this bound is attained. As a result, it is not possible to utilize the multi shot scenario to decrese the estimator's variance.     

\subsection{Sample complexity of reduced noise channel tomographic reconstruction}
Here we derive sample complexity of reconstruction of reduced measurement Choi-Jamiolkowski states for all possible choices of $k$ qubits subsets.
\label{sub:dotchoi}
\subsubsection{Preliminaries: Choi-Jamiołkowski isomorphism}
In order to describe noisy quantum measurements we use the formalism of quantum channels. In general, a quantum channel $\Lambda(\cdot)$ is a completely positive trace preserving (CPTP) map $\mathcal{L}: \mathbb{C}^{d \times d} \rightarrow \mathbb{C}^{d \times d}$ mapping operators to operators. A quantum channel $\Lambda(\cdot)$  can be conveniently represented as quantum state, the so-called Choi state, via Choi-Jamiołkowski isomorphism  
\begin{eqnarray}
\label{eq:choi_definition}
    \mathcal{J}_{\Lambda} = (\Lambda(\cdot) \otimes I_d ) | \Omega \>\< \Omega|_{XY},
\end{eqnarray}
where $| \Omega \>_{XY} = \frac{1}{\sqrt{d}} \sum_{i}^d |ii\>_{XY}$  is the maximally entangled state. The action of a quantum channel on a quantum state $\rho$ is then expressed as
\begin{eqnarray}
    \Lambda(\rho) = \text{tr}_{Y}\left( \mathcal{J}_{\Lambda} \left( I \otimes \rho^T \right)  \right),
\end{eqnarray}
where the partial trace is performed over the second subsystem. 

The measurement process is a special instance of a quantum-classical quantum channel of a form
\begin{eqnarray}
\label{eq:measurement_channel}
    \mathcal{M}(\rho) = \sum_{i} \text{tr}\left( M_{i} \rho \right) |i \rangle \langle i| ,
\end{eqnarray}
where the post-measurement state is diagonal and according to Eq. (\ref{eq:choi_definition}) its Choi state is 
\begin{eqnarray}
\label{eq:ChoiMeasurement}
   \mathcal{J}_{\mathcal{M}} = \frac{1}{d} \sum_{i} |i \rangle \langle i| \otimes M_i^T
\end{eqnarray}
\subsubsection{Projected least-squares estimation of the reduced measurement channel}
 
Here we prove a Lemma regarding sample complexity of Choi-Jamiołkowski state reconstruction.  Our starting point is the realization that, using relations between quantum channels and their Choi states, $\M_{\x_A,\y_A}^{A}$ can be rewritten as 

\begin{eqnarray}
   \M^{A}_{\x_A,\y_A} = \text{tr} \left( |\x_A \rangle \langle \x_A| \M^{A} \left( |\y_A \rangle \langle \y_A| \right) \right) = \text{tr} \left( |\x_A \rangle \langle \x_A|  \mathcal{J}_{\M^A}  \left( I_A \otimes |\y_A \rangle \langle \y_A|^T_{A'}   \right) \right)
\end{eqnarray}

Using the results of \cite{SurawyStepneyQPT} we find that an unbiased estimator to the channels Choi state is
\begin{eqnarray}
\label{eq:choi_estimator}
     \mathcal{\hat{J}}_{\M^A}^{LS}  = \frac{1}{3^k\times 2^k} \sum_{\x_A,\y_A} \hat{\M^A}_{\x_A,\y_A} |\x_A \> \<\x_A| \otimes \left( 3 |\y_A \> \<\y_A| -I_{2^k} \right), 
\end{eqnarray}
where $|\y_A \> \<\y_A| $ is the state prepared as an input to the DOT protocol, which consists of a tensor product of single qubit Pauli eigenstates.

The requirements of a proper quantum map to be completely positive and trace preserving are equivalent to  positive semidefiniteness of the map's Choi state and the partial trace condition $\text{tr}_1 \left( \mathcal{J}_{\Lambda} \right) = I_d/d$. There is no guarantee that the estimator $\mathcal{\hat{J}}_{\M^A}^{LS}$ fulfills those conditions, and sometimes a projection on the set of $CPTP$ is required, for details see \cite{SurawyStepneyQPT}. Here we denote the projected estimator as $\mathcal{\hat{J}}_{\Lambda^A}^{PLS}$.

\textbf{Lemma 1}  \textit{Experimental DOT data allow to estimate Choi-Jamiolkowski state of the reduced noise matrix $\rho_{\mathcal{M}^A}$ for all possible choices of the $k$ qubit subset A and the  within $\epsilon$ error if the number $N_{Mes}$ of circuits used in the experiment as   $O\left(\frac{k 6^k \log (N) }{\epsilon^2} \right)$   } \\

\textit{Proof.} 
We start by bounding probability of a joint estimation of Choi-Jamiolkowski states of reduced measurement channels  
\begin{eqnarray}
\label{eq:DOT_union_bound}
    \text{Pr} \left[ \exists_{A} \left\| \mathcal{\hat{J}}_{\M^A}^{PLS} - \mathcal{J}_{\M^A} \right\|_{HS} \geq \epsilon \right].
\end{eqnarray}
From the union bound it follows that 
\begin{eqnarray}
\label{eq:ChoiNormBound}
        \text{Pr} \left[ \exists_{A} \left\| \mathcal{\hat{J}}_{\M^A}^{PLS} - \mathcal{J}_{\M^A} \right\|_{HS} \geq \epsilon \right] \leq  \binom{N}{k}\text{Pr} \left[ \left\| \mathcal{\hat{J}}_{\M^A}^{PLS} - \mathcal{J}_{\M^A}^{PLS} \right\|_{HS} \geq \epsilon \right]. 
\end{eqnarray}
Subsequently one adapts the results of \cite{SurawyStepneyQPT} - Eq. (16) in the main text. Taking into account that the measurement channels are of the quantum-classical form (the output of the channel is a mixture of orthogonal projectors with probabilities corresponding to probabilities of observing a corresponding outcome, so in Eq. (16) one can substitute $3^{2k}$ with $3^{k}$ as the full tomography of the output state of the channel is not need, one finds    that 
\begin{eqnarray}
\label{eq:ChoiComplexity}
    \text{Pr} \left[  \left\| \mathcal{\hat{J}}_{\M^A}^{PLS} - \mathcal{J}_{\M^A}  \right\|_{HS} \right] \leq 2^{2k} \exp \left[-\frac{3 N_{Mes} \epsilon^2}{64 \times 6^{k}} \right] ,
\end{eqnarray}
and, by introducing an upper bound on failure $\delta$ finally  we find that
\begin{eqnarray}
    N_{Mes}= \frac{6^k}{\epsilon^2} \frac{64}{3} \left[ \log \binom{N}{k} +2k \log 2 + \log \left( \frac{1}{\delta} \right) \right].
\end{eqnarray}
\qedsymbol
 
 \subsection{Relations between Hilbert-Schmidt norm and average-case, and worst case distances}
 \label{sub:dotdistances} 
In this section we bound average-case and worst case distance between two POVMs by the Hilbert-Schmidt norm of their respective Choi-Jamiolkowski states. This allows us to prove Theorem 2 of the main text.

We start by proving the following lemma regarding the averadge case distance: 
\begin{lem}
 \textit{For $d$ outcome POVMs $\M, \N$  the following relation between their average case and Hilbert-Schmidt distances holds} \\ 
\begin{eqnarray}
     \dist_{AC} \left( \M, \N \right) \leq \frac{\sqrt{d+1}}{2}  || \hat{J}_{\M}^{PLS} - \mathcal{J}_{\N} ||_{\text{HS}}.
\end{eqnarray}
\end{lem}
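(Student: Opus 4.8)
\emph{Proof strategy.} The plan is to reduce both sides to explicit expressions in the per-outcome operators $D_i := M_i - N_i$ and then compare them. First I would expand the right-hand side. Since the measurement Choi states have the block form $\mathcal{J}_\M = \frac{1}{d}\sum_i |i\rangle\langle i|\otimes M_i^T$ (Eq.~(\ref{eq:ChoiMeasurement})), their difference is supported block-diagonally in the classical register. Using the orthogonality $\langle i|j\rangle=\delta_{ij}$ of the pointer states together with the fact that the transpose is a Hilbert--Schmidt isometry, all cross terms vanish and one obtains
\begin{equation}
\| \mathcal{J}_\M - \mathcal{J}_\N \|_{\text{HS}}^2 = \frac{1}{d^2}\sum_i \| D_i \|_{\text{HS}}^2 .
\end{equation}
This identity is the bridge that turns the Hilbert--Schmidt distance of the Choi states into the per-outcome quantities appearing in $\dist_{AC}$.

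Next I would isolate where the factor $d+1$ originates. The summand of the average-case distance is $\sqrt{\| D_i\|_{\text{HS}}^2 + \tr^2(D_i)}$, so I would control the trace contribution by the Cauchy--Schwarz inequality for the Hilbert--Schmidt inner product, $\tr^2(D_i) = \langle I, D_i\rangle_{\text{HS}}^2 \le \|I\|_{\text{HS}}^2\,\|D_i\|_{\text{HS}}^2 = d\,\|D_i\|_{\text{HS}}^2$, with equality precisely when $D_i \propto I$. Substituting gives the outcome-wise bound $\|D_i\|_{\text{HS}}^2 + \tr^2(D_i) \le (d+1)\,\|D_i\|_{\text{HS}}^2$, and summing over outcomes,
\begin{equation}
\sum_i \left( \|D_i\|_{\text{HS}}^2 + \tr^2(D_i) \right) \le (d+1)\sum_i \|D_i\|_{\text{HS}}^2 = (d+1)\,d^2\,\|\mathcal{J}_\M - \mathcal{J}_\N\|_{\text{HS}}^2 .
\end{equation}
The $\sqrt{d+1}$ in the statement is thus entirely a consequence of this single trace estimate.

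The remaining step, which I expect to be the main obstacle, is to pass from the \emph{sum of square roots} defining $\dist_{AC} = \frac{1}{2d}\sum_i \sqrt{\|D_i\|_{\text{HS}}^2 + \tr^2(D_i)}$ to the aggregate quantity just bounded. Because the root sits inside the outcome sum, one must compare $\sum_i \sqrt{a_i}$ with $\sqrt{\sum_i a_i}$ via Cauchy--Schwarz (equivalently, concavity of the square root) over the $d$ outcomes, and it is precisely this aggregation that fixes the dimensional prefactor multiplying $\|\mathcal{J}_\M - \mathcal{J}_\N\|_{\text{HS}}$. Carried out together with the two displayed inequalities and the $\frac{1}{2d}$ normalization, it produces a bound of the advertised form; the delicate point is tracking the constant through this last aggregation, and checking whether the POVM normalization $\sum_i D_i = 0$ (so that the trace deviations $\tr D_i$ cannot be chosen freely) can be leveraged to sharpen it. Once this step is settled, the claimed inequality follows.
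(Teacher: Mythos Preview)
Your approach is essentially identical to the paper's: the same trace estimate $\tr^2 D_i \le d\,\|D_i\|_{\text{HS}}^2$, the same Cauchy--Schwarz aggregation $\sum_i \|D_i\|_{\text{HS}} \le \sqrt{d}\,\sqrt{\sum_i \|D_i\|_{\text{HS}}^2}$, and the same Choi identity $\sum_i \|D_i\|_{\text{HS}}^2 = d^2\|\mathcal{J}_\M-\mathcal{J}_\N\|_{\text{HS}}^2$, only applied in a different order. Your caution about the constant is warranted: carrying these three steps through yields $\frac{\sqrt{d(d+1)}}{2}$ rather than $\frac{\sqrt{d+1}}{2}$, which is in fact exactly what the paper's own chain of inequalities produces just before its final displayed line, and the constraint $\sum_i D_i = 0$ is not invoked in either argument.
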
 

\textit{Proof.-} \\

Average case distance is defined as 
 \begin{eqnarray}
 \label{ap:avdefproof}
 \dist_{AC} \left( \M, \N \right) = \frac{1}{2d} \sum_{k=1}^{d}  \sqrt{|| \M_k - N_k||_{\text{HS}}^2+\tr^2\left(M_{k}-N_k \right)}
 \end{eqnarray}

We proceed by bounding the second term on the right hand side. Let us considers decomposition of the POVM elements into a sum
 \begin{equation}
     M_k = \text{tr}\left( M_k  \right) \frac{I}{d} +  M_{k,0}, 
 \end{equation}
where $M_{i,0}$ is a traceless part of $M_{i}$. This allows to show that
\begin{equation}
    \tr^2\left(M_{k}-N_k\right) \leq d ||M_{k}-N_k||^2_{\text{HS}}.
\end{equation}
The above inequality is now inserted into the definition of the average case distance (Eq. (\ref{ap:avdefproof})) to get 
\begin{eqnarray}
     \dist_{AC} \left( M^{A}_{PLS}, M^{A} \right) \leq \frac{\sqrt{d+1}}{2d} \sum_{k=1}^d  ||M^{A}_{PLS,k}-M^{A} _k||_{\text{HS}}.
\end{eqnarray}
Subsequently we use  Cauchy–Schwarz inequality to obtain the bound
\begin{eqnarray}
     \dist_{AC} \left( \M, \N \right) \leq \frac{\sqrt{d(d+1)}}{2d} \sqrt{  \sum_{k=1}^d  ||\M_k-\N_k||^2_{\text{HS}} }.
\end{eqnarray}
Now use the explixit form of a measurement channel Choi-Jamiolkowski state (Eq. (\ref{eq:ChoiMeasurement})) to obtain  
\begin{eqnarray}
     \dist_{AC} \left( \M, \N \right) \leq \frac{\sqrt{d+1}}{2}  || \hat{J}_{\M} - \mathcal{J}_{\N} ||_{\text{HS}},
\end{eqnarray}
what is the statement of Lemma
\qedsymbol.

A similar reasoning is used in the case of worst-case distance. In  particular, the key relation here is 
\begin{eqnarray}
    \dist_{WC} \left( \M, \N \right) \leq \frac{1}{2} \sum_{k=1}^{d} ||M_{k}-N_k||_{\text{HS}} \leq \frac{d}{2} || \hat{J}_{\M} - \mathcal{J}_{\N} ||_{\text{HS}}.  
\end{eqnarray}

We can now apply Lemma 1 to the case of reconstruction of Choi-Jamiolkowski states of reduced  measurement channels $\M^A_{PLS}$ discussed in the previous subsection. Using Eq. (\ref{eq:ChoiComplexity}) one obtains Theorem 2 of the main text.

\subsection{Noise matrices estimation - Finite size effects }
Here we analyze finite size effects in estimation of noise matrix $\Lambda$ acting on marginal probability distributions. To this end we firstly define worst-case distance between stochastic maps

\textbf{Definition} \textit{Let $\Lambda$, $\Gamma$ be two stochastic maps acting on $d$ dimensional probability distributions. The worst-case distance between $\Lambda$, $\Gamma$ is defined as}
\begin{equation}
\| \Lambda - \Gamma  \|_{WC} \equiv \max_{p} \| \left( \Lambda- \Gamma \right) p  \|_{1} = \max_{l=1,\ldots, d}    \|  \left( \Lambda- \Gamma \right) e_l  \|_{1},
\end{equation}
where the maximum is taken with respect to all possible $d$ dimensional probability distributions $p$, and from convexity it follows that it is attained on one of the extremal probability distributions $e_l$, which are of a form $e_{l,i}=\delta_{li}$.

We now provide a bound on estimation of a noise matrix $\Lambda$ from the measurement statistics.   

\textbf{Lemma 2} \textit{The worst case distance between a noise matrix $\Lambda$ and its estimate $\hat{\Lambda}$ is bounded by}

\begin{eqnarray}
   \| \hat{\Lambda} - \Lambda  \|_{WC} \leq \epsilon^* =  \sqrt{\frac{\log \left(2^k(2^{2^k}-2) \right) - \log P_{err} }{2N_s}},
\end{eqnarray}
\textit{where $k$ is the number of qubits, on which $\Lambda$ acts, $N_{S}$ is the number of samples used to estimate $\hat{\Lambda}$, and $P_{err}$ is the probability that the bound does not hold. }

\textit{Proof.-}

Using definition of the worst case distance for stochastic matrices we have 
\begin{eqnarray}
    \| \hat{\Lambda} - \Lambda  \|_{WC} =  \max_{l=1,\ldots, d} \text{TVD} \left( \hat{\Lambda} e_l ,\Lambda e_l\right) =  \max_{l=1,\ldots, d} \text{TVD} \left( \hat{p}^{\Lambda}_l ,p^{\Lambda}_l\right) ,
\end{eqnarray}
where $\hat{p}^{\Lambda}_l \equiv \hat{\Lambda} e_l$ is estimate of probability distribution corresponding to the action of the noise matrix on $l$-th extremal probability distribution $e_l$. As a result, computation of the worst case distance between $\hat{\Lambda}$ and $\Lambda$ has been translated into maximization of Total Variational Distance of corresponding probability distributions. Therefore, we can use results of \cite{Weissman2003}, where the probability of estimated distribution  being $\epsilon$ close to the true distribution was bounded as  
\begin{eqnarray}
    \text{Pr} \left[ \text{TVD}\left( \hat{p}^{\Lambda}_l, p^{\Lambda}_l \right) \geq \epsilon \right] \leq (2^{2^k}-2) )\exp(-2 N_S \epsilon^2),
\end{eqnarray}
where $N_S$ denotes the number of samples used to estimate $\hat{p}^{\Lambda}_l$, $k$ is the number of qubits and hence $2^k$ is dimension of space of $p^{\Lambda}_l$. To obtain the bond on the worst-case distance between $\hat{\Lambda}$ and $\Lambda$, one needs to perform optimization over $l$, i.e. all pairs of probability distributions and their estimates. To this end we use union bound and write
\begin{eqnarray}
\label{eq:HPbound}
   \text{Pr} \left[\| \hat{\Lambda} - \Lambda  \|_{WC}  \geq \epsilon \right] =   \text{Pr} \left[ \max_{k=1,\ldots, d} \text{TVD}\left( \hat{p}^{\Lambda}_l, p^{\Lambda}_l \right) \geq \epsilon \right] \leq 2^k (2^{2^k}-2)\exp(-2 N_S \epsilon^2),
\end{eqnarray}
where the last inequality was obtained using the fact that the $d=2^k$.

We will now consider the above probability to be fixed and equal $P_{err}$. Appropriate rewriting of Eq. (\ref{eq:HPbound}) allows to arrive at the statement of Lemma 2  
\begin{eqnarray}
   \| \hat{\Lambda} - \Lambda  \|_{WC} \leq \epsilon^* =  \sqrt{\frac{\log \left(2^k(2^{2^k}-2) \right) - \log P_{err} }{2 N_S}}.
\end{eqnarray} \qedsymbol

\section{Correlations coefficients}\label{sec:app:correlation_coefficients}
The aim of this Section is to investigate some properties of quantum correlation coefficients (Sec. ~\ref{sec:app:correlation_coefficients_properties} ), and present a method to compute them (Sec. ~\ref{sec:app:correlation_coefficients_calculation}.)   

In the main text, we defined quantum correlations coefficient for qubit pairs. More generally quantum correlation coefficients for two distinguished subsystems $A$, $B$ can be defined as

\begin{eqnarray}
\label{eq:app:correlation_coefficient_general}
    c^{WC/AC}_{B \rightarrow A} = \sup_{\rho_{B}, \sigma_{B}} \dist_{WC/AC} \left( \M^{A,\rho_{B}}, \M^{A,\sigma_{B}} \right)\     ,
\end{eqnarray}
where $\dist_{WC}$ indicates worst-case (Eq.~\eqref{eq:operational_dist}), and $\dist_{ac}$  average-case distance (Eq.~\eqref{eq:average_dist}) distance.

\subsection{Properties}\label{sec:app:correlation_coefficients_properties}

We begin by showing that for diagonal POVMs the supremum involved in the definition of the correlation coefficients is attained by computational basis states.

\begin{prop}
\label{prop:app:classcial_noise_correlations_general}
Consider a POVM $\M$ with \emph{diagonal} effects, and its reduced versions $\M^{A, \rho_{B}}$, $\M^{A,\sigma_{B}}$  acting on subsystem $A$ (for definition see  Eq.~\eqref{eq:reduced_povm_definition}). 
Then we have
\begin{align}\label{eq:app:wc_coefficient_classical}
    \sup_{\rho_{B},\sigma_{B}} \dist_{WC/AC}(\M^{A,\rho_{B}},\  \M^{A,\sigma_{B}}) =
    \sup_{\x_B,\ \y_B} \dist_{WC/AC}(\M^{A,\ketbra{\x}{\x}_{B}},\  \M^{A,\ketbra{\y}{\y}_{B}}),
\end{align}
where $\ketbra{\x}{\x}_{B}$ and $\ketbra{\y}{\y}_{B}$ denote classical, computational-basis states labeled by bitstrings $\x_{B}$ and $\y_{B}$.
\begin{proof}
The proof consists of two steps. 
Firstly, from linearity of partial trace, and from convexity of both distances \cite{MaciejewskiAverageCaseDistanceProofs}, it follows that the supremum is attained for pure states. 
Thus the maximization can be constrained to pure states on $B$. Secondly, the fact that POVM effects are diagonal implies that its reduced effects are also diagonal. Therefore in maximization over pure states on $B$ it suffices to consider only diagonal states (this follows directly from Eq.~\eqref{eq:reduced_povm_definition}).
Thus we restricted maximization to pure, diagonal states, hence states from computational basis.
\end{proof}
\end{prop}
In the case of two-qubit diagonal POVMs Proposition \ref{prop:app:classcial_noise_correlations_general} implies that supremum is attained for $\ketbra{0}{0}$, $\ketbra{1}{1}$ states on the second qubit.

Let us now consider that a POVM $\M$ is related to a projective computational-basis measurement $\P$ via a stochastic map $\Lambda$.

We now provide definition of correlation coefficients in terms of a stochastic map $\Lambda$, which transforms a projective computational-basis measurement $\P$ into .  
In a previous work \cite{Maciejewski2021modeling} correlation coefficients were defined in terms of stochastic maps $\Lambda$ acting on a computational-basis projective measurement $\P$.


\begin{property}\label{prop:app:stochastic_maps_correlations}
Consider arbitrary two-qubit POVM $\M$ related to computational-basis measurement $\P$ via a stochastic transformation $\Lambda$: $M_{\x_{AB}} = \sum_{\x_{A}' \x_{B}'} \Lambda_{\x_{A} \x_{B} \x_{A}'\x_{B}'}P_{\x_{A}' \x_{B}'}$.
Then we have 
\begin{align}\label{eq:app:wc_stochastic_maps}
    c_{B\rightarrow A}^{WC} = \frac{1}{2}|| \Lambda^{A,\ketbra{0}{0}_B}-\Lambda^{A,\ketbra{1}{1}_B}||_{1\rightarrow 1}
\end{align}
\begin{align}\label{eq:app:ac_stochastic_maps}
    c_{k\rightarrow l}^{AC} = \frac{1}{2}\sqrt{\frac{1}{2}|| \Lambda^{A,\ketbra{0}{0}_B}-\Lambda^{A,\ketbra{1}{1}_B}||_{HS}^{2}+\left(\tr{(\Lambda^{A,\ketbra{0}{0}_B}-\Lambda^{A,\ketbra{1}{1}_B})}\right)^2}.
\end{align}
The norm $||L||_{1\rightarrow1}$ is a norm induced by $l1$ distance and it can be written as $||L||_{1\rightarrow 1}$ = $\max_{l} \sum_{k}|L_{kl}|$ (maximal $l1$ norm over columns of $A$). $\Lambda^{A,\ketbra{\x}{\x}_B}  $ is a stochastic map relating reduced POVM on qubit $A$ and a single-qubit computational basis measurement, provided that qubit $B$ was in state $\ketbra{x}{x}$, with entries $\Lambda^{A,\ketbra{\tilde{\x}}{\tilde{\x}}_B}_{\x_A, \x_A'} \equiv \sum_{\x_B} \Lambda_{\x_{A} \x_{B} \x_{A}'\tilde{\x}}$. 
We note that Eq.~\eqref{eq:app:wc_stochastic_maps} recovers definition of classical correlations coefficients introduced in Ref.~\cite{Maciejewski2021modeling}.
\begin{proof}
First step of the proof is to use Proposition~\ref{prop:app:classcial_noise_correlations_general} that allows us to consider only classical states on qubit $B$. This implies that the solution of maximization is given by marginal POVMs on qubit $A$ conditioned on qubit $B$ being in either state $\ketbra{0}{0}$ or $\ketbra{1}{1}$, so only stochastic maps corresponding to those input states need to be considered. Taking this into account explicit calculations of  both distances on matrix elements of reduced POVMs using Eq.~\eqref{eq:app:wc_combinatorial} yield desired relations.
\end{proof}
\end{property}

\begin{property}\label{prop:app:lower_bounds_correlations}
Consider two-qubit POVM $\M$ (acting on qubit $A$ and $B$).
Consider its reduced POVMs $\M^{A,\rho_B}$ and $\M^{A,\sigma_B}$ on qubit $A$ conditional on qubit $B$ being in state $\rho_B$ and $\sigma_B$, respectively.
Define maximally depolarizing channel $\Phi_{\mathrm{dep}}$ that acts on POVM $\M\rightarrow \Phi_{\mathrm{dep}}(\M)$ by changing its effects to $\M_i\rightarrow\Phi_{\mathrm{dep}}(\M_i)=\mathrm{diag}\left(\M_i\right)$, where $\mathrm{diag}\left(L\right)$ denotes diagonal part of operator $L$.

Then we have
\begin{eqnarray}\label{eq:app:wc_coefficient_classical}
    c_{B\rightarrow A}^{WC/AC} = &&\sup_{\rho_B, \sigma_B} \dist_{WC/AC}(\M^{A,\rho_B},\  \M^{A,\sigma_B}) \geq \sup_{\rho_B, \sigma_B} \dist_{WC/AC}\left(\Phi_{\mathrm{dep}}\left(\M^{A,\rho_B}\right),\  \Phi_{\mathrm{dep}}\left(\M^{A,\sigma_B}\right)\right) = \\ &&\dist_{WC/AC}(\M^{A,\ketbra{0}{0}_B},\  \M^{A,\ketbra{1}{1}_B})  ,
\end{eqnarray}
\begin{proof}
The proof has two steps.
The first inequality follows from data processing inequality. Then the equality on RHS in follows from Property~\ref{prop:app:classcial_noise_correlations_general} proved above.
\end{proof}
\end{property}
The depolarized version of POVM can be considered as a classical part of the noise -- this is the part of POVM that can be related to computational-basis projective measurement $\P$ via some stochastic map $\Lambda$ \cite{Maciejewski2020mitigation}.
The above thus means that total, quantum correlations are lower bounded by classical correlations in readout noise -- as one would expect.

\subsection{Calculating correlations coefficients}\label{sec:app:correlation_coefficients_calculation}

In previous Section~\ref{sec:app:correlation_coefficients_properties} we have shown that if the noise is \emph{classical} (i.e., reduced POVMs are diagonal), then both worst-case and average-case correlations coefficients can be computed directly from stochastic maps describing reduced noise on qubit $i$, conditioned on qubit $j$ being in classical state (Eqs.~\ref{eq:app:wc_stochastic_maps} and \ref{eq:app:ac_stochastic_maps}).
When noise is non-classical, then one needs to perform optimization over all pure (see proof of Property~\ref{prop:app:classcial_noise_correlations_general}) quantum states of the functions that cannot be optimized directly using exact methods. Note, however, that classical correlation coefficients (i.e., coefficients corresponding to diagonal part of the POVM) provide a natural lower bound for general, quantum correlation coefficients as follows from Property~\ref{prop:app:lower_bounds_correlations}.

We now consider problem of calculating those general coefficients. 
Let us start with considering optimization involved in Eq.~\eqref{eq:app:correlation_coefficient_general} when distance is worst-case.
Here we have (cf. Eq~\eqref{eq:correlation_coefficient_general})
\begin{align}
  c_{B\rightarrow A}^{WC} =   \sup_{\rho_{B}, \sigma_{B}}\ \dist_{WC}(\M_{A}^{\rho_{B}}, \M_{A}^{\sigma_{B}}) = \sup_{\rho_B, \sigma_B}\ ||(\M_{A}^{\rho_{B}})_0-(\M_{A}^{\sigma_{B}})_0||_{op} \ .
\end{align}

We start by rewriting optimization problem as follows
\begin{align}
    c_{B\rightarrow A}^{WC} &= \sup_{\rho_B,\sigma_B} ||(\M_{A}^{\rho_B})_0-(\M_{A}^{\sigma_B})_0||_{op} =  \sup_{\rho_B,\sigma_B} ||\tr_{B}{\left(\M_0\ \iden\otimes (\rho_B-\sigma_B)\right)}||_{op} = \sup_{\Delta} ||\tr_{B}{\left(\M_0\ \iden\otimes \Delta\right)}||_{op} \ .
\end{align}
In the above we defined we defined coarse-grained (over qubit $B$) effect as $\M_{0}=\sum_{\x_B=0,1}M_{0 \x_B}$ with convention that we label measurement operators via two-bit strings $\mathbf{\x}=\x_{A}\x_{B}$, where $\x_A,\x_B\in\left\{0,1\right\}$.
We further defined a difference of quantum states as $\Delta = \rho_B-\sigma_B$.
The above optimization is thus performed over 
variable $ \Delta$ that is a traceless operator with condition $-\iden\leq \Delta\leq \iden$ (this follows from the fact that it is a difference of two quantum states). 

Now we propose to parameterize $\Delta$ and treat those parameters as variables in constrained black-box optimization, where cost function is the above operator norm.
Since we use black-box optimization, the method is not exact.
However, the problem involves only single-qubit system and each cost function evaluation is performed very fast, one can thus perform an exhaustive search over parameters space to obtain a good solution.

For AC-distance \cite{MaciejewskiAverageCaseDistance}, we propose to adopt the same strategy, namely optimize over $\Delta$, the following function
\begin{align}
     \sup_{\Delta} \frac{1}{2}\ \sqrt{||\tr_{B}{\left(M_0\ \iden\otimes \Delta\right)}||_{HS}^2+\left(\tr_{B}{\left(M_0\ \iden\otimes \Delta\right)}\right)^2} \ ,
\end{align}
via black-box optimizers.


\section{Channel reductions}
\label{sec:channel_reductions}

The purpose of this section is to show that the reduced POVMs operators introduced in the main section emerge as a special case out of a more general formalism of reduced channels. We will firstly define reduced channels, and then show that a reduction of a quantum-classical channel corresponding to a measurement process yields reduced POVMs from the main text. 

\begin{definition}
Let $\Lambda(\cdot)$ be a $N$-qubit quantum channel. Denote by $A$ a $k$ qubit subset and by $\bar{A}$ its complement. The reduction of the channel $\Lambda (\cdot)$ over $\bar{A}$ is defined as    

\begin{equation}
 \label{eq:reduced_noise_channel} 
     \Lambda^{A}_{\sigma_{\bar{A}}} \left( \rho_{A} \right) \equiv \tr_{\bar{A}} \left( \Lambda \left( \rho_{A} \otimes \sigma_{\bar{A}} \right) \right),
\end{equation}
where $\rho_{A}$ is an input state to the reduced channel, and  $\sigma_{\bar{A}}$ is a fixed state on the complement.
\end{definition}
Note that this definition of reduction is different from the one used in the channel marginal problem \cite{HsiehChannelMarginalProblem2021}, where it is assumed that a valid marginal channel must be independent of the input state to the complement. In the case of cross-talk effects considered here this assumption seems to be too restrictive. 

Let us now consider the reduction of a quantum-classical channel describing the measurement process associated with a POVM $\M$: $\Lambda_M(\rho) = \sum_{\x} \tr \left( M_{\x} \rho \right)  |\x \rangle \langle \x |$. According to Definition 1 we have   
\begin{equation}
\Lambda^{A}_{\M,\sigma_{\bar{A}}} = \sum_{\x_A \x_{\bar{A}}} \text{tr} \left( M_{\x_A \x_{\bar{A}}} \left( \rho_{A} \otimes \sigma_{\bar{A}} \right)   \right)   
\text{tr}_{\bar{A}} \left( |\x_A \x_{\bar{A}} \rangle \langle \x_A \x_{\bar{A}}|  \right) = \sum_{\x_A} \text{tr}_{\bar{A}} \left( M^{A,\sigma_A}_{\x_A } \rho_A  \right) |\x_A \rangle \langle \x_A |,  
\end{equation}
where
\begin{eqnarray}
M^{A,\sigma_A}_{\x_A } \equiv \sum_{x_{\bar{A}}} \tr_{\bar{A}} \left(M_{\x_A \x_{\bar{A}}} \left( \rho_A \otimes I \right) \right).
\end{eqnarray}

Therefore the reduced POVM effects of the main text are recovered from the more general approach of reduction of quantum channels.

\section{Details of the clustering algorithms}
\label{sec:det_clust}

\subsection*{Objective function}
 We start with detailed description of the heuristic objective function $\phi$, whose maximisation realizes the clustering. The function $\phi(\mathcal{P})$ is a function of partition $\mathcal{P}=\lbrace \cluster_1, \cluster_2,\ldots, \cluster_n \rbrace $ of the set of $N$ qubits onto disjoint subsets. The function takes the following form
\begin{align}\label{eq:objective_function}
    \phi\rbracket{\mathcal{P}} =\sum_{i=1}^n S_i-   c_{\text{avg}} \sum_{i=1}^n   f_{\text{penalty}}\rbracket{|\cluster_i|}\ .
\end{align}
In what follows we describe in detail symbols appearing in the above expression. First, the term $S_i$ captures the strength of correlations within  cluster $\cluster_{i}$:  
\begin{align}
        \label{eq:overall_correlation_strength}
        S_{i}\coloneqq \sum_{k,l \in \cluster_i, k \neq l } c_{k \rightarrow l}\ , 
        \end{align}
where correlation coefficients $c_{k\rightarrow l}$ are based on Eq. \eqref{eq:correlation_coefficient_general}. Second, $f_{\text{penalty}}\rbracket{|\cluster_i|}$ is the penalty function for the size of the $i$-th cluster 
\begin{align}
\label{eq:penelty_function}
       f_{\text{penalty}}\rbracket{|\cluster_i|} \coloneqq \begin{cases}
        \alpha\, |\mathcal{C}_i|^2 \quad \text{ if }|\cluster_i| \leq |C_{\text{max}}|\\
        \infty \quad          \\
        \end{cases}.
\end{align}
The role of $f_{\text{penalty}}$ is to ensure that sizes of all clusters do not exceed the value $C_{\text{max}}$. The parameter $\alpha \geq 0$ is introduced to tune the weight of the cluster size penalty, in the main text. Lastly,  $c_{\text{avg}}$ is the average "large correlation coefficient'' defined as follows. Let us consider $N$ qubits. Then we expect to have about $N_{\mathrm{large}} \coloneqq N(C_{\text{max}}-1) $   
``large correlation coefficients'', i.e. the correlation coefficients for qubits within clusters\footnote{This number can be obtained counting how many ordered pairs of qubits are there with with large correlations $c_{k\rightarrow l}$ in the situation in which the system splits onto clusters of strongly correlated qubits of size $C_{\text{max}}$}. Let $C_{\mathrm{large}}$ be the set of pairs $(k,l)$ corresponding to $N_{\mathrm{large}}$ largest correlation coefficients $c_{k \rightarrow l}$. We now define $c_{\text{avg}}$ as
\begin{equation}
\label{eq:cavg}
c_{\text{avg}} \coloneqq \frac{1}{N_{\mathrm{large}}} \sum_{(k,l) \in C_{\mathrm{large}}} c_{k\rightarrow l}   .
\end{equation}
By changing the value of the tuning parameter $\alpha$ it is possible to tune the importance of the penalty term. The structure of clusters $\mathcal{P}$ returned by the algorithms. However, test of the algorithm performed on model data sets show that the final cluster structure does not vary significantly with change of $\alpha$. Therefore, we used the algorithm with $\alpha =0$. 

\subsection*{Maximisation of objective function}

To maximize the cost function $\phi(\mathcal{P})$ we employ an the following local updates of the cluster structure:
i) a move of a selected qubit to a different cluster, and i) a swap of a selected pair of qubits between clusters. We check how value of $\phi\rbracket{\mathcal{P}}$ changes under these operations, and accept changes leading to the largest value of $\phi\rbracket{\mathcal{P}}$. 
The input of our algorithm comprises of: 
 number of runs $N_{\mathrm{runs}}$, maximal cluster size $C_{\mathrm{max}}$, tuning parameter $\alpha$ and correlation coefficients coefficients $\lbrace{c_{i \rightarrow j}\rbrace}$.
Single optimization procedure ends when a convergence criterion is met. Since the choices of elementary operations are randomized, different runs may lead to different clusters structures. The final clustering outputted by the algorithm is the best clustering over $N_{\mathrm{run}}$ runs, i.e. the one with highest value o $\phi(\mathcal{P})$. More details on the algorithm are given in the pseudo-code description below. In order to test whether the algorithm does not get stuck on a local minima we tested its randomized version, in which the acceptance of a change leading to the largest value of $\phi\rbracket{\mathcal{P}}$ is done with some probability. Tests performed on a model data showed that both the randomized and deterministic version of the algorithm output correct cluster structure.

\begin{algorithm}[H]
\small
 \caption{Heuristic clustering algorithm \label{alg:hca_v1}$\qquad \ $}
  \RaggedRight\textbf{Input data}:\\
  \RaggedRight $c_{i \rightarrow j}$: correlation coefficients for each directed pair of $N$ qubits\\
  \RaggedRight\textbf{Input parameters}:\\
  \RaggedRight $N_{\mathrm{runs}}$: number of runs\\
  \RaggedRight $C_{\mathrm{max}}$: maximal number of qubits in cluster\\
  \RaggedRight $\alpha$: tuning parameter\\
\begin{enumerate}
    \item Calculate $c_{avg}$ according to Eq. (\ref{eq:cavg}). The quantity $c_{avg}$ together with $C_{\mathrm{max}}$ and $\alpha$ determines the objective function $\phi$ via Eqs.(\ref{eq:objective_function} - \ref{eq:penelty_function}).
  \item Create the correlation list by computing summarized correlation coefficients $c_{ij}=c_{i \rightarrow j}+c_{j \rightarrow i}$.
  \item Create an initial cluster assignment $\mathcal{P}_{init}$. The clusters are formed out of qubit pairs, in decreasing order of $c_{ij}$ coefficients.
  \item Create the global best cluster assignment $\mathcal{P}_{best}$ and initialize it by $\mathcal{P}_{init}$. 
  \item
   \begin{algorithmic}[1]
   \While{the number of runs is smaller than $N_{\mathrm{runs}}$}
  
      \LState Create the best cluster assignment in the run $\mathcal{P}_{run}$ and initialize it by $\mathcal{P}_{init}$. 
      \Repeat
      \LState Create a list of all unordered pairs of qubits $S_{pairs}$
    
     \While{ the list $S_{pairs}$ is non-empty}
    
          \LState Choose a random pair of qubits $i',j'$ from $S_{pairs}$, and remove them from $S_{pairs}$.

        \If{the qubits belong to different clusters} calculate change of the objective function $\phi$ under
              \LState A move of qubit $i'$ to the cluster of qubit $j'$
              \LState A move of qubit $j'$ to the cluster of qubit $i'$
              \LState A swap of qubits $i',j'$ between their clusters
    
       \If{a better cluster assignment has been found} update $\mathcal{P}_{run}$   
       \EndIf
      \EndIf
     
  \EndWhile
     \Until{$\mathcal{P}_{run}$ hasn't been updated}
    \If{$ \phi( \mathcal{P}_{run})>\phi(\mathcal{P}_{best})$ }     update $\mathcal{P}_{best}$
    \EndIf
    
    \EndWhile
    
      \end{algorithmic}

\item \Return {$\mathcal{P}_{best}$}
 
\end{enumerate}
\end{algorithm}

The choice of particular clustering can be also decided by application or benchmark-driven criteria.  More specifically, the benchmark-based method starts with a scan through different values of $\alpha$.
For a fixed $\alpha$, the chosen clustering algorithm is used to obtain the clusters $\mathcal{P}$ and the corresponding cluster noise model is constructed. Cluster noise models constructed in that way are subsequently tested in a benchmark, and $\alpha$ leading to a model with the highest score is chosen.
It may happen that some values of $\alpha$  lead to the same clusters, and hence to the same cluster noise model.

\section{Additional experimental data}
\label{sec:add_experiment_details}
In Fig. \ref{fig:error_prediction} results of energy prediction benchmark for IBM Cusco and Rigetti Aspen-M-3 are presented. For both devices Clusters and Neighbours noise model outperforms Tensor Product Noise model. In both cases median of prediction error is $\sim 80 \%$ smaller for CN noise model than for TPN model.  


\begin{figure*}[t]
   \includegraphics[width=0.48\textwidth]{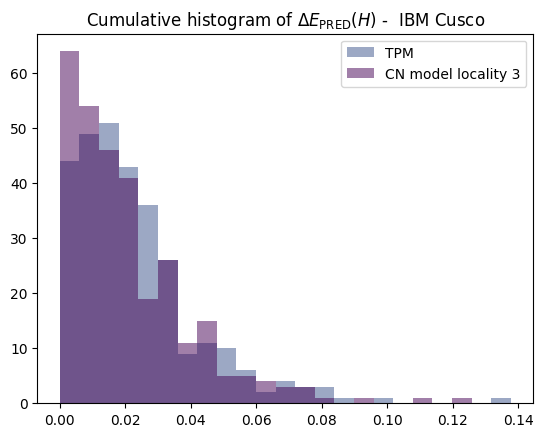} 
    \includegraphics[width=0.48\textwidth]{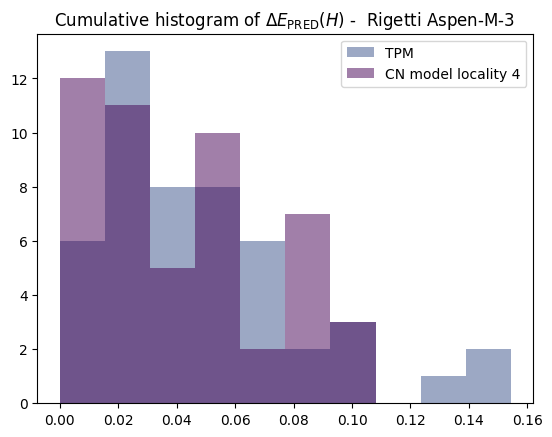}
\caption{\label{fig:error_prediction}
 Results of error-prediction benchmark for \qibm-qubits on IBM Cusco - left panel,
 and \qrig-qubits on Rigetti’s Aspen-M3 - right panel.  
 In both figures distribution of error in energy estimation for error prediction based on product noise model (light purple), correlated clusters noise model (dark purple) is presented. The error is calculated according to Eq. (\ref{eq:benchmark_prediction_error}). 
 The histograms were created from 300 Hamiltonians for IBM Cusco, and 50 Hamiltonians of the same locality Aspen-M3. In both cases Hamiltonians were built from single and two-body terms involving Pauli $Z$ operators. }
\end{figure*}


\end{document}